\newtheorem{theorem}{Theorem}
\newtheorem{lemma}{Lemma}
\theoremstyle{definition}
\newtheorem{definition}{Definition}
\begin{document}

\title{Greedy Selection for Heterogeneous Sensors}

\author{Kaushani Majumder,~\IEEEmembership{Student Member,~IEEE}, Sibi~Raj~B.~Pillai, Satish Mulleti,~\IEEEmembership{Member,~IEEE}
\thanks{K. Majumder, S. R. B. Pillai, and S. Mulleti are with the Department of Electrical Engineering, Indian Institute of Technology Bombay, Mumbai, 400076, India. Emails: kaushanim@iitb.ac.in, bsrajs@gmail.com, mulleti.satish@gmail.com}
}

\markboth{Submitted to the IEEE Transactions on Signal Processing}%
{Shell \MakeLowercase{\textit{et al.}}: Greedy Algorithm for Sensor Selection in Heterogeneous Sensor Networks}


\maketitle

\renewcommand\qedsymbol{$\blacksquare$}

\begin{abstract}
Simultaneous operation of all sensors in a large-scale sensor network is power-consuming and computationally expensive. Hence, it is desirable to select fewer sensors. A greedy algorithm is widely used for sensor selection in homogeneous networks with a theoretical worst-case performance of $(1-1/e) \approx 63\%$ of the optimal performance when optimizing submodular metrics. For heterogeneous sensor networks (HSNs) comprising multiple sets of sensors, most of the existing sensor selection methods optimize the performance constrained by a budget on the total value of the selected sensors. However, in many applications, the number of sensors to select from each set is known apriori and solutions are not well-explored. For this problem, we propose a joint greedy heterogeneous sensor selection algorithm. Theoretically, we show that the worst-case performance of the proposed algorithm is bounded to $50\%$ of the optimum for submodular cost metrics. In the special case of HSNs with two sensor networks, the performance guarantee can be improved to $63\%$ when the number of sensors to select from one set is much smaller than the other. To validate our results experimentally, we propose a submodular metric based on the frame potential measure that considers both the correlation among the sensor measurements and their heterogeneity. We prove theoretical bounds for the mean squared error of the solution when this performance metric is used. We validate our results through simulation experiments considering both linear and non-linear measurement models corrupted by additive noise and quantization errors. Our experiments show that the proposed algorithm results in $4-10$ dB lower error than existing methods.
\end{abstract}

\begin{IEEEkeywords}
Sensor selection, heterogeneous sensor networks, submodular maximization, greedy algorithm, frame potential.
\end{IEEEkeywords}

\section{Introduction}

In many applications such as medical imaging and healthcare \cite{constantinos2001medical, ko2010wireless, alemdar2010wireless}, seismic processing \cite{liu2013volcanic, savazzi2013ultra, khedo2020inland}, environmental monitoring \cite{hart2006environmental, rajasegarar2013high}, power networks \cite{bose1987real, terzija2010wide}, smart homes and internet of things (IoT) networks \cite{alam2012review, li2014design}, etc., a large number of sensors are deployed.
In these applications, the task is often to estimate a low-dimensional parameter vector from the sensor measurements. The accuracy of the estimation task increases with the number of sensors or measurements. However, this results in high power requirements and higher communication and storage costs. To reduce operational costs, one can deploy fewer sensors. 
Such fixed deployment is inefficient and non-adaptive when the measuring fields and their noise characteristics change over time. Alternatively, deploying a larger number of sensors and then choosing the desired numbers based on the information available apriori is advantageous. Activating only a few sensors at a time also leads to a considerable reduction in operating costs.

\begin{figure}[t]
    \centering
    \includegraphics[width= 2.6 in]{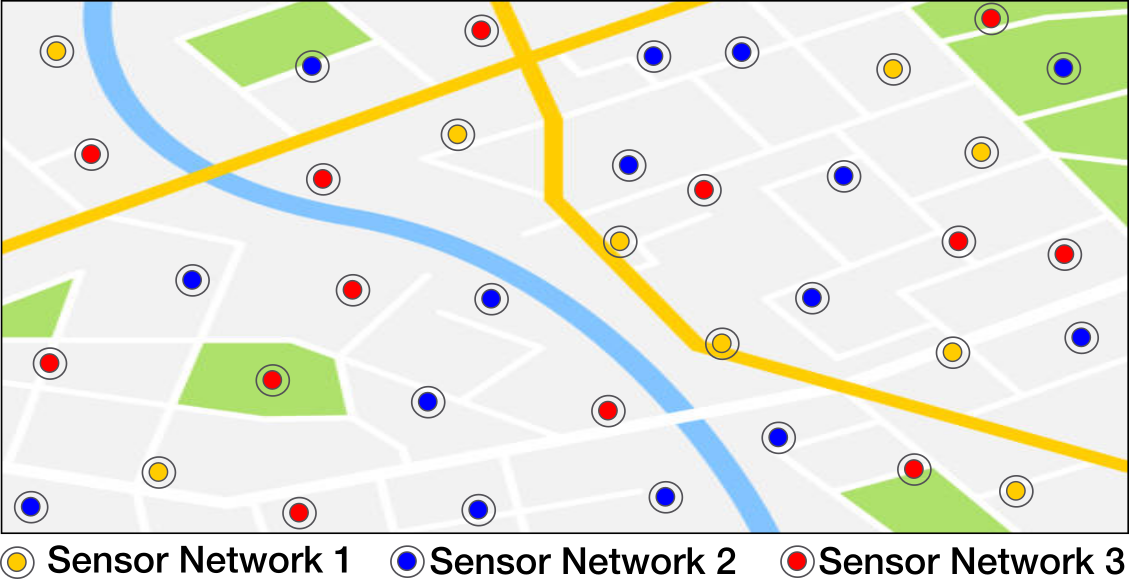}
    \vspace{-0.1in}
    \caption{
    An example of an HSN comprising three subnetworks of sensors. All the sensors observe the same phenomenon.}
    \label{fig:HSN}
    \vspace{-0.2in}
\end{figure}

The sensor networks could be either homogeneous, where all the sensors have similar accuracies, or heterogeneous. In heterogeneous sensor networks (HSNs), multiple sets of sensors observe the same or diverse phenomena \cite{viani2013wireless, peters2015utilize, rajasegarar2013high}. When an HSN measures a common phenomenon, the sensors in different sets are characterized by their accuracies or, equivalently, their noise levels. Such an HSN is shown in Fig.~\ref{fig:HSN} \cite{peters2015utilize, rajasegarar2013high}. 
For example, consider an environmental monitoring task where multiple sets of sensors with different precisions or noise characteristics are deployed to observe a phenomenon \cite{hart2006environmental, rajasegarar2013high}. Data fusion is performed on the measurements from all these sets in order to obtain a more robust estimate of the phenomenon under observation. To reduce the communication and processing cost and extend each subnetwork's lifetime, only a small number
of observations from each set needs to be selected to achieve the best possible estimation accuracy. 

The sensor selection problem in HSNs can also be used for sensor placement problems when the locations for the different sensor sets are predefined. For example, in weather monitoring networks \cite{zhang2018spatial}, high-precision weather stations can only be placed at sparse locations over a geographical region due to their high cost of construction, unavailability of suitable installation locations, or high power consumption. The estimation inaccuracy induced by the low spatial density of these high-precision sensors is compensated by deploying in bulk, low-powered, inexpensive, low-precision sensors. The question then arises: given probable placement locations for the different sets of sensors, which locations should be selected to ensure the best possible accuracy in estimating weather conditions?

In both these examples and several such practical scenarios, the task is to maximize a performance criterion constrained by the cardinality of sensors to be selected from each subset. In a few other applications, the sensor selection problem is formulated with a budget constraint, assigning each sensor a predetermined value \cite{sviridenko2004note, sviridenko_2017_optimal, badanidiyuru2014fast}. However, in the examples considered, cardinality constraints describe the problem better, resulting in a combinatorial optimization problem for which finding the optimal solution is computationally expensive. Even in homogeneous sensor networks, selecting a subset of sensors to reduce the operational cost is an NP-hard problem \cite{joshi2008sensor}. 
Various methods have been proposed to find an approximate solution to this problem, such as convex relaxation \cite{joshi2008sensor, chepuri2014sparsity}, different heuristics \cite{chiu2004simulated, al2008optimizing, mukherjee2006systematic, mackay1992information, wang2004entropy}, and a greedy selection procedure \cite{nemhauser1978analysis, Mirzasoleiman_2015_Lazier, hashemi_2020_randomized}. While theoretical performance bounds are available for the convex relaxation methods, they are computationally expensive. Heuristics perform well in practice, although no theoretical guarantee is available for their performance. The greedy algorithm, on the other hand, is computationally faster than the convex relaxation approach while achieving a worst-case performance of $\left( 1 - 1/e \right) \approx 63\%$ of the optimal performance when optimizing a submodular cost metric \cite{nemhauser1978analysis}, where $e$ is Euler's constant.

On the other hand, limited results are available for the sensor selection problem in HSNs. In \cite{wolsey_1982_analysis, sviridenko2004note, lin_2010_multi, badanidiyuru2014fast, sviridenko_2017_optimal, esmaeilbeig_2023_submodular}, the authors develop a modified greedy algorithm that maximizes the performance while satisfying a budget constraint on the total operating cost. These methods are shown to have performance guarantees approaching the limit $\left( 1 - 1/e \right)$. However, they cannot be used when the number of sensors to choose from each subset is specified. Zhang et al. \cite{zhang2018spatial} proposed a cross-entropy-based stochastic optimization method for the HSN sensor selection problem where they considered minimizing the total operational cost of the selected sensors while achieving a specified mean squared error (MSE). In this approach, there is an undesirable possibility that the algorithm may choose all the available sensors to achieve a given MSE. 

Kirchner et al. \cite{kirchner2020heterogeneous} consider the heterogeneous sensor selection problem in vehicle tracking applications, where sets of sensors obtain multiple snapshots of a common phenomenon. The task is to find the optimal number of snapshots each sensor needs to communicate with a central data processing unit, with a budget on the maximum bandwidth allocated to the sensor. Their approach is indeed to independently and greedily select measurements from each sensor without considering the inter-sensor correlations. Moreover, \cite{kirchner2020heterogeneous} does not reduce the number of sensors.

The problem of sensor placement in HSNs has been considered in \cite{zhang2017sensor, bushnaq2020sensor}. In \cite{zhang2017sensor}, high or low-power sensor placement is considered at each available location based on the feedback of successful transmission. On the other hand, \cite{bushnaq2020sensor} considers sensor placement in IoT networks, optimizing the energy harvesting capabilities of the diverse sensors according to  their channel gains and the required accuracy of the estimation task. These methods fundamentally differ from the sensor selection task since the type of sensor at each location is not fixed, and moreover, the number of sensors of each type to select is also a design parameter in \cite{zhang2017sensor, bushnaq2020sensor}. 

All the state-of-the-art methods for sensor selection in HSNs suffer from the drawback that they do not work under cardinality constraints. 
In this paper, we propose an algorithm to select the optimal sensors when the number of sensors to be selected from each set is specified. Each sensor measures a spatial observation induced by a set of deterministic unknown parameters, which are to be estimated. Different noise distributions characterize each set of sensors. In such a scenario, finding the optimal subset from each set is considered while achieving the maximum possible performance.

The main contributions of this paper are as follows.
\begin{itemize}
    \item This work introduces a greedy algorithm for selecting sensors in HSNs under cardinality constraints specifying the number of sensors to select from each subset.

    \item Theoretically, the proposed algorithm is proved to achieve at least $50\%$ of the optimal performance when optimizing a submodular cost metric. In the special case of HSNs comprising two sets of sensors, the theoretical performance guarantee can be improved to $(1 -1/e)$ when the number of high-precision sensors to be chosen is less than a fraction ($10\%$ or lower) of the low-precision ones.

    \item A metric called weighted frame potential (WFP) is introduced, which extends the frame potential (FP) criterion for sensor selection to HSNs. It exploits both the correlation between the measurements and their noise characteristics. We prove optimality guarantees for WFP and related guarantees for the mean squared error of the solution. A submodular performance metric called weighted frame cost (WFC) based on WFP is proposed.

    \item Extensive experiments are conducted to evaluate the proposed algorithm for both linear as well as non-linear measurement models. We perform small-scale experiments to compare the performance of the proposed method with the optimal solution obtained through an exhaustive search. Large-scale sensor selection problems are simulated and the proposed method is compared against other greedy-based and random approaches. Our experiments consider the effect of both additive Gaussian noise and quantization on the measurements. Finally, as an application of the proposed algorithm in non-linear measurement models, we consider the problem of selecting sensors for estimating the direction-of-arrival of plane waves emitted by multiple sources.

    \item The small-scale experiments show that the proposed algorithm achieves near-optimal solutions (over $99\%$ of the optimal WFC), which is better than the theoretical performance guarantees obtained. This is similar to the performance of existing greedy methods for sensor selection in homogeneous networks \cite{nemhauser1978analysis}.

    \item In all the experiments conducted, the proposed method consistently achieves a mean squared error of $4$-$10$ dB less than the existing methods.

\end{itemize}

The rest of the paper is organized as follows. In Section~\ref{sec:prob_def}, we define the problem. The greedy algorithm for homogeneous sensor selection is presented in Section~\ref{sec:Preliminaries}. The proposed algorithm is introduced in Section~\ref{sec:JGS}, and its theoretical performance guarantees are derived. The WFP metric and the related optimality bounds are presented in Section~\ref{sec:WFP}. Various experiments and simulation results are presented in Section~\ref{sec:Experiments}, and Section~\ref{sec:Conclusion} concludes the paper.

Throughout the paper, vectors (matrices) are represented as bold-faced lower-case (upper-case) letters such as $\mathbf{a}$ ($\mathbf{A}$). 
We use $\mathbf x \in \mathbb R^N$ ($\mathbf x \in \mathbb C^N$) to denote a real (complex) $N$-dimensional vector. The norm of a vector $\mathbf{x}$ belonging to a normed space is denoted as $\| \mathbf{x} \|$, and the inner product between two vectors $\mathbf{x}$ and $\mathbf{y}$ is denoted as $\langle \mathbf{x}, \mathbf{y} \rangle$. In particular, $\| \mathbf{x} \|_2$ denotes the Euclidean norm of $\mathbf{x}$. Sets are represented by upper-case calligraphic letters like $\mathcal{S}$, with $|\mathcal{S}|$ denoting its cardinality and $\mathcal{P} \left( \mathcal{S} \right)$ standing for the power set of $\mathcal S$. A set function $\mathcal{C}: \mathcal{P} \left( \mathcal{S} \right) \rightarrow \mathbb{R}$ is defined as a function that acts on any subset $\mathcal T \subseteq \mathcal{S}$, producing a real-valued output. For a positive integer $N$, we use the following representation $\mathcal{N} =\{ 1, 2, \ldots, N \}$.

\section{Problem Definition} \label{sec:prob_def}

Consider the problem of recovering an unknown vector $\mathbf{x} \in \mathbb{C}^K$ from its noisy measurements obtained by $N \geq K$ sensors. The measurements are modeled as
\begin{align}
    \mathbf{y} = \mathcal{A}(\mathbf{x})  +\boldsymbol{\eta} ,
    \label{eq: problem_model}
\end{align}
where $\mathbf{y}\in  \mathbb{C}^N$is the measurement vector with its $i$-th element representing observation from the $i$-th sensor. The measurement operator $\mathcal{A}: \mathbb{C}^{K} \rightarrow \mathbb{C}^N$ is a function of the locations of the sensors and $\boldsymbol{\eta} \in \mathbb{C}^N$ is an additive noise vector. The measurement operator could be linear or non-linear. While $\mathcal{A}(\cdot)$ is represented as a measurement matrix $\mathbf{A} \in \mathbb{C}^{N \times K}$ in the linear case,
yielding
\begin{align}
    \mathbf{y} = \mathbf{A}\mathbf{x}  +\boldsymbol{\eta},
    \label{eq: problem_model_linear}
\end{align}
more general non-linear models also come under our purview.
For example, in the direction-of-arrival (DoA) estimation problem, $\mathcal{A}(\mathbf{x})$ is given as $\mathbf{A}(\boldsymbol{\theta}) \boldsymbol{\alpha}$ where $\boldsymbol{\theta}$ and $\boldsymbol{\alpha}$ denote angles and amplitudes of sources. In this case, the unknown parameters to be estimated are $\mathbf{x} = [\boldsymbol{\theta}^{\mathrm{T}} \,\, \boldsymbol{\alpha}^{\mathrm{T}}]^{\mathrm{T}}$. Here, since $\mathbf{A}(\boldsymbol{\theta})$ is a function of the parameters $\boldsymbol{\theta}$, the observations $\mathbf{y}$ are not linear functions of $\boldsymbol{\theta}$.

In our work, we consider a heterogeneous sensor network comprising $L$ sets of sensors $\mathcal{S}_1, \mathcal{S}_2, \ldots, \mathcal{S}_L$, with  $\mathcal{S}_i\subseteq \mathcal{N}$, for $i = 1, \ldots, L$, where $\displaystyle \bigcup_{i=1}^L \mathcal{S}_i = \mathcal{N}$ and $\displaystyle \bigcap_{i=1}^L \mathcal{S}_i = \varnothing$. We assume that the sensors in the different sets have varying precision. To model the precision, we assume that the elements of the noise vector $\boldsymbol{\eta}$ corresponding to the set $\mathcal{S}_i$ have standard deviation $\sigma_i$, $i = 1, \ldots, L$. We further assume that elements of $\boldsymbol{\eta}$ for two different sensors are independent. Hence, the covariance matrix of the noise vector is given as $\boldsymbol{\Sigma} = \text{diag} \left( \sigma_1^2, \sigma_2^2, \ldots, \sigma_N^2 \right)$, with $\sigma_j^2 = \sigma_i^2$ if $j \in \mathcal{S}_i$, $i = 1, \ldots, L$.

Our objective is to select $M_i$ sensors from $\mathcal{S}_i$, for $i = 1, \ldots, L$, such that a given accuracy metric $\mathcal{C}$ is maximized. This is captured in the following optimization problem.
\begin{align}
\begin{aligned}
    \underset{\mathcal{T}_i \subseteq \mathcal{S}_i, i = 1, \ldots, L}{\arg \max} \quad  &\mathcal{C}\left( \bigcup_{i=1}^L \mathcal{T}_i \right), \\
    \text{subject to } &\left| \mathcal{T}_i \right| = M_i, \quad \text{for } i = 1,\ldots, L \notag
    \end{aligned} \tag{P1} \label{eq:relaxed_prob}
\end{align}
Notice that \eqref{eq:relaxed_prob} is a combinatorial optimization problem for which no polynomial time algorithms are known to exist. Particularly, finding the optimal solution requires a computationally expensive search over all possible subsets of $\mathcal{S}_1, \ldots, \mathcal{S}_L$, respecting the cardinality constraints. Instead, we propose an algorithmic solution to (\ref{eq:relaxed_prob}) using a greedy approach and characterize its worst-case performance theoretically.  In order to lay the foundations for
our algorithm, we first present the well-known greedy selection (GS) algorithm for homogeneous sensor networks ~\cite{nemhauser1978analysis}.

\section{Greedy Selection For Homogeneous Sensors} \label{sec:Preliminaries}
The problem of sensor selection has been well explored in the case of homogeneous sensor networks composed of sensors with similar noise levels \cite{nemhauser1978analysis, joshi2008sensor, chepuri2014sparsity, chiu2004simulated, al2008optimizing, mukherjee2006systematic, mackay1992information, wang2004entropy}. 
Consider the problem of selecting $M$ out of $N$ sensors to maximize a metric $\mathcal{C}$. The corresponding optimization problem is formulated as
\begin{align}
    \underset{\mathcal{T} \subseteq \mathcal{S}}{\arg \max}\quad \mathcal{C}\left( \mathcal{T} \right), \label{eq:sensor_sel_gen}
    \quad\text{subject to }  \left| \mathcal{T} \right| \le M. \tag{P2} 
\end{align}
Here $\mathcal{S}$ denotes the set from which the required subset $\mathcal{T}$ of size at most $M$ needs to be selected.
 
An iterative greedy selection algorithm to solve (\ref{eq:sensor_sel_gen}) was proposed by Nemhauser et al. \cite{nemhauser1978analysis}. This method starts with an empty set $\mathcal{T}$. At each iteration, the algorithm chooses a new measurement/sensor from the remaining ones, which maximizes the performance metric $\mathcal{C}$ (or alternatively minimizes estimation error). This step is repeated $M$-times to select those many measurements (See Algorithm \ref{alg:greedy_alg_gen}).
\begin{algorithm}[tb]
    \caption{Greedy Algorithm for Homogeneous Sensor Selection}
    \label{alg:greedy_alg_gen}
    \begin{algorithmic}[1]
        \STATE \textbf{Initialize:} $\mathcal{T} = \varnothing$
        \FOR{$m = 1 \text{ to } M$}
            \STATE $i^* = \underset{i \notin \mathcal{T}}{\arg \max \text{ }} \mathcal{C} \left( \mathcal{T} \cup  \{i\}  \right)$ \COMMENT{Greedy Search}
            \STATE $\mathcal{T} \gets \mathcal{T} \cup \{i^*\}$
        \ENDFOR
    \end{algorithmic}
\end{algorithm}

Nemhauser et al. \cite{nemhauser1978analysis} showed that when $\mathcal{C}$ is a normalized monotone non-decreasing submodular function, then the worst case error in the performance of the solution given by the GS algorithm is bounded as follows,
\begin{align}
    \frac{\mathcal{C}(\mathcal{T}^\ast)}{\mathcal{C} \left( \mathcal{T}_{OPT} \right) } \ge \left( 1 - \left( \frac{M-1}{M} \right)^{^M} \right) 
    \ge \left( 1 - \frac{1}{e} \right),
    \label{eq:greedy_selection}
\end{align}
where $\mathcal{T}^\ast$ is the solution found by the greedy algorithm and $\mathcal{T}_{OPT}$ is the optimal solution of~\eqref{eq:sensor_sel_gen}. Let  $\mathcal{P}\left( \mathcal{N} \right)$ denote the power set of $\mathcal{N}$. Then a set function $\mathcal{C} : \mathcal{P}\left( \mathcal{N} \right) \rightarrow \mathbb{R}$ is said to be monotone increasing if $\mathcal{C}(\mathcal{S}) \ge \mathcal{C}(\mathcal{T})$, $\forall \mathcal{T} \subseteq \mathcal{S} \subseteq \mathcal{N}$, and is said to be normalized if $\mathcal{C}(\varnothing) = 0$, where $\varnothing$ denotes the empty set. A submodular function is a class of set functions that follows the property of diminishing returns, mathematically stated as $\mathcal{C}\left( \mathcal{S} \cup \{j\} \right) - \mathcal{C} \left( \mathcal{S} \right) \le \mathcal{C}\left( \mathcal{T} \cup \{j\} \right) - \mathcal{C} \left( \mathcal{T} \right), \forall \mathcal T \subseteq \mathcal S$.
The performance guarantee stated in (\ref{eq:greedy_selection}) theoretically bounds the GS algorithm's worst-case performance to $\left( 1 - 1/e \right) \approx 63\%$ of the optimal performance.

In the next section, we propose a modified version of the greedy algorithm to make it suitable for selecting sensors in HSNs, as described in \eqref{eq:relaxed_prob}, and analyze its performance.

\section{A Joint Greedy Selection Algorithm for Heterogeneous Sensor Networks} \label{sec:JGS}
The problem of sensor selection in HSNs, as stated in (\ref{eq:relaxed_prob}), has not been explored much in the literature, to the best of our knowledge. A straightforward method is to choose $M_i$ sensors from $\mathcal S_i$, for $i=1,\ldots, L$, independently using the homogeneous greedy selection algorithm. We call this the independent greedy selection (IGS) approach. Another technique respecting the constraints is to independently and randomly select (IRS) the required number of sensors from each set. Last but not least, an exhaustive search can find the optimal answer but has prohibitive computational complexity. For comparison, let $\mathcal T_{OPT}$ be an optimal set identified by exhaustive search. While it is clear that the first two approaches quickly identify feasible solutions, it will be unraveled later that they fail to yield good performance. Therefore, as discussed next, we turn to a joint greedy selection (JGS) approach.

\subsection{Proposed Joint Greedy Selection Algorithm}
 The JGS algorithm simultaneously considers all the sets $\mathcal{S}_1, \ldots, \mathcal{S}_L,$ to circumvent the shortcomings of the independent selection approaches discussed. The algorithm adds one sample at a time, starting from $L$ empty sets. Specifically, we start with $\mathcal{T}_i = \varnothing, i = 1, \ldots, L$. Let $\displaystyle \mathcal{T} = \bigcup_{i=1}^L \mathcal{T}_i$ denote the set of selected sensors. Then for $|\mathcal{T}_i|<M_i$, $i = 1, \ldots, L$, at every iteration, we select a new sample $t^\ast \notin \mathcal{T}$ such that the cost $\mathcal{C}(\mathcal{T} \cup \{ t^\ast \})$ is maximized. After the selection, the set $\mathcal{T}_i$ is updated depending on whether the sample $t^\ast$ belongs to $\mathcal{S}_i$ for $i = 1, \ldots, L$. The iterations continue until one of the sets is exhausted, that is, until $\left| \mathcal{T}_i \right| = M_i$ condition is satisfied for $i \in \{1, \ldots, L\}$. After this point, the search is restricted to the sets where $\left| T_i \right| < M_i$. The iterations continue until we select the desired number of sensors from each set, exhausting one set at a time. These steps are summarized in Algorithm~\ref{alg:JGS}.

It is to be noted that in JGS, there is a change in the search space when the algorithm exhausts one of the subsets, that is, when any of $\left| \mathcal{T}_i \right| = M_i$ for $i = 1, \ldots, L$ is satisfied. In the first few iterations, JGS searches for the next candidate sensor from the entire set of unselected sensors. However, once one of the sets is exhausted, the algorithm can only search over the remaining subsets. This is the step where the algorithm essentially differs from the vanilla GS algorithm. It is unknown beforehand at which iteration this switch occurs, and as we discuss later, the iteration at which the switch takes place plays a key role in the performance analysis of the JGS algorithm.

\begin{algorithm}[tb]
    \caption{Joint Greedy Selection (JGS) Algorithm for Heterogeneous Sensor Networks with Submodular Cost}
    \label{alg:JGS}
    \begin{algorithmic}[1]
        \STATE \textbf{Initialize:} $\displaystyle \mathcal{T} = \varnothing, \mathcal{T}_i = \varnothing \text{ for } i = 1, \ldots, L , \mathcal{T}^c = \bigcup_{i = 1}^L \mathcal{S}_i$
        \FOR{$\displaystyle m = 1 \text{ to } \sum_{i = 1}^L M_i$}
            \STATE $t^\ast = \underset{t \in \mathcal{T}^c}{\arg\max} \quad \mathcal{C} \left( T \cup \{t\}  \right)$ 
            \COMMENT{Greedy Search} \\
            \COMMENT{Updating the currently selected sets} \\
            \IF{$t^\ast \in \mathcal{S}_i$, for $i \in \{1, \ldots, L\}$} 
                \STATE $\mathcal{T}_i \gets \mathcal{T}_i \cup \{t^\ast\}$
            \ENDIF
            \STATE $\displaystyle \mathcal{T} = \bigcup_{i=1}^L \mathcal{T}_i$ \\
            \COMMENT{Updating the set to select from} \\
            \STATE $\mathcal{T}^c = \varnothing$
            \FOR{$i = 1, \ldots, L$}
            \IF{$\left| \mathcal{T}_i \right| < M_i$}
                \STATE $\mathcal{T}^c \gets \mathcal{T}^c \cup \left( \mathcal{S}_i - \mathcal{T}_i \right)$
            \ENDIF
            \ENDFOR
        \ENDFOR
    \end{algorithmic}
\end{algorithm}

\subsection{Performance Guarantee}
The proposed JGS algorithm may generate a suboptimal solution for (\ref{eq:relaxed_prob}). Hence, to evaluate the algorithm, it is essential to determine its worst-case performance. In this section, we prove a theoretical bound for the maximum deviation of the achieved performance from the optimal performance. For the rest of this section, we shall assume that the cost function $\mathcal{C}$ being maximized by the algorithm is a normalized, monotone, non-decreasing submodular function.

Our main result, summarized in the following theorem, gives a bound on the worst-case performance of the proposed JGS algorithm in terms of the cost achieved by an optimal set $\mathcal{T}_{OPT}$ for (\ref{eq:relaxed_prob}).

\begin{theorem}\label{thm:Perf_Guar_Hetero_Half}
    Consider the optimization problem in (\ref{eq:relaxed_prob}) where $\mathcal{C}$ is assumed to be a normalized, monotone, non-decreasing, submodular function. Then the cost of the final set $\mathcal{T}^\ast$ selected by the algorithm is bounded as
    \begin{equation}
        \mathcal{C} \left( \mathcal{T}^\ast \right) \ge \frac{1}{2} \mathcal{C} \left( \mathcal{T}_{OPT} \right)
        \label{thrm_1}
    \end{equation}
\end{theorem}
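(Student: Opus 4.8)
The plan is to split the greedy run at the iteration $p$ where the first subset gets exhausted, and to analyze the two phases separately, comparing each against a suitably chosen ``reference'' optimal set. Let $\mathcal{T}^\ast$ be the final greedy set and write $\mathcal{T}^{(m)}$ for the partial greedy set after $m$ iterations, so $\mathcal{T}^{(0)}=\varnothing$ and $\mathcal{T}^{(M)}=\mathcal{T}^\ast$ with $M=\sum_i M_i$. Let $p$ be the iteration at which some subset, say $\mathcal{S}_1$ (WLOG), first satisfies $|\mathcal{T}_1|=M_1$; thus during iterations $1,\dots,p$ the greedy search ranges over \emph{all} unselected sensors, and during iterations $p+1,\dots,M$ it is restricted to a shrinking feasible pool. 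The first observation I would record is that the prefix $\mathcal{T}^{(p)}$ is exactly what the ordinary (unconstrained-by-subset) greedy algorithm of Nemhauser et al.\ would have produced in $p$ steps, so standard submodular machinery applies to it; and the suffix, viewed as a greedy continuation that only ever adds feasible elements, is a greedy run relative to the already-chosen $\mathcal{T}^{(p)}$ on the restricted ground set.

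The key inequality I would build is the classical one-step bound: for a monotone submodular normalized $\mathcal{C}$, at any greedy step the marginal gain is at least $\tfrac{1}{k}\big(\mathcal{C}(\mathcal{Q})-\mathcal{C}(\mathcal{T}^{(m)})\big)$ for any target set $\mathcal{Q}$ of size $\le k$ all of whose elements are still available to the greedy search at step $m+1$. Apply this with $\mathcal{Q}$ a component of $\mathcal{T}_{OPT}$. Concretely, write $\mathcal{T}_{OPT}=\mathcal{O}_1\cup\mathcal{O}_{\text{rest}}$ where $\mathcal{O}_1\subseteq\mathcal{S}_1$ is the part of the optimum in the first-exhausted subset ($|\mathcal{O}_1|=M_1$) and $\mathcal{O}_{\text{rest}}$ is the rest ($|\mathcal{O}_{\text{rest}}|=M-M_1$). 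During phase~1 every element of $\mathcal{T}_{OPT}$ is a legal greedy choice, and during phase~2 every element of $\mathcal{O}_{\text{rest}}$ is still legal (the subsets other than $\mathcal{S}_1$, and whichever others are not yet exhausted, still contain their optimal members, or at least enough room — this monotone-availability point is where care is needed). Running the telescoping argument over phase~1 against the target $\mathcal{O}_{\text{rest}}$ gives $\mathcal{C}(\mathcal{T}^{(p)})\ge\big(1-(1-\tfrac1{M-M_1})^{p}\big)\,\mathcal{C}(\mathcal{O}_{\text{rest}})$ — or more crudely $\mathcal{C}(\mathcal{T}^{(p)})\ge \tfrac{p}{M-M_1}\big(\mathcal{C}(\mathcal{O}_{\text{rest}}) - \mathcal{C}(\mathcal{T}^{(p)})\big)$ type bounds — and running phase~2 (of length $M-p$) against the same target, using submodularity to carry the prefix value along, gives a complementary bound. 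Adding the two phases and using $\mathcal{C}(\mathcal{T}_{OPT})\le\mathcal{C}(\mathcal{O}_1)+\mathcal{C}(\mathcal{O}_{\text{rest}})\le \mathcal{C}(\mathcal{T}^{(p)})+\mathcal{C}(\mathcal{O}_{\text{rest}})$ (monotonicity plus subadditivity, the latter from submodularity and normalization) should collapse everything to $\mathcal{C}(\mathcal{T}^\ast)\ge\tfrac12\mathcal{C}(\mathcal{T}_{OPT})$, the factor $\tfrac12$ arising because the optimum is charged once to the greedy prefix and once to the greedy suffix.

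The main obstacle I anticipate is the bookkeeping around \emph{which} optimal elements remain available to the greedy search in phase~2, and getting a clean single target set that works for \emph{both} phases. Because several subsets may get exhausted at different times, the restricted ground set in phase~2 changes, and one must verify that the part of the optimum we are comparing against is never prematurely excluded — essentially, if subset $\mathcal{S}_j$ is exhausted by greedy, then the $M_j$ greedy picks from $\mathcal{S}_j$ can absorb (via submodular exchange) the $M_j$ optimal picks from $\mathcal{S}_j$, so we never need those optimal elements to still be ``selectable.'' Making this exchange rigorous — probably by peeling off the optimum subset-by-subset in the order the greedy algorithm exhausts them, and invoking submodularity to bound $\mathcal{C}(\mathcal{T}^\ast \cup \mathcal{T}_{OPT})$ in terms of $\mathcal{C}(\mathcal{T}^\ast)$ — is the crux. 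A cleaner alternative I would try first: prove the two-sided bound $\mathcal{C}(\mathcal{T}^\ast)\ge \mathcal{C}(\mathcal{T}^{(p)})$ and $2\,\mathcal{C}(\mathcal{T}^\ast)\ge \mathcal{C}(\mathcal{T}^{(p)})+\mathcal{C}(\mathcal{T}_{OPT})-\mathcal{C}(\mathcal{T}^{(p)})$... i.e., show directly that $\mathcal{C}(\mathcal{T}^\ast) + \mathcal{C}(\mathcal{T}^{(p)}) \ge \mathcal{C}(\mathcal{T}_{OPT})$ by a single telescoping sum over all $M$ greedy steps with a target that is $\mathcal{O}_{\text{rest}}$ throughout, then bound $\mathcal{C}(\mathcal{T}^{(p)})\le \mathcal{C}(\mathcal{T}^\ast)$ by monotonicity, yielding $2\mathcal{C}(\mathcal{T}^\ast)\ge\mathcal{C}(\mathcal{T}_{OPT})$ directly. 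I expect the published proof does something close to this, with the iteration index $p$ as the pivot.
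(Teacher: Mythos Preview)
Your proposal circles around the right intuition but never lands on the argument that actually closes the gap, and the concrete plans you sketch do not work as stated. The paper's proof does \emph{not} split the run at a pivot iteration $p$ for this theorem; that phase-splitting analysis (with pivot $m_s$) is what the paper uses for the refined two-set bound in Theorem~\ref{thm: Perf_Guar_Hetero}, not for the $1/2$ guarantee. For Theorem~\ref{thm:Perf_Guar_Hetero_Half} the paper uses a clean partition-matroid exchange argument: for each subset $\mathcal{S}_i$ construct a bijection between the $M_i$ greedy picks $\mathcal{T}^\ast\cap\mathcal{S}_i$ and the $M_i$ optimal picks $\mathcal{T}_{OPT}\cap\mathcal{S}_i$ (identity on the overlap, arbitrary bijection on the rest). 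Then at each greedy step $m$, if $t^m\in\mathcal{S}_i$, compare its marginal gain against the matched optimal element $o^i_{j^m}\in\mathcal{S}_i$; because greedy is currently picking from $\mathcal{S}_i$, that subset is active and $o^i_{j^m}$ is a legal competitor (or already chosen, in which case its marginal is zero). Summing over all $M$ steps and using submodularity and monotonicity gives $\mathcal{C}(\mathcal{T}^\ast)\ge \rho_{\mathcal{T}_{OPT}}(\mathcal{T}^\ast)\ge \mathcal{C}(\mathcal{T}_{OPT})-\mathcal{C}(\mathcal{T}^\ast)$ directly.

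Your ``cleaner alternative'' --- telescope against $\mathcal{O}_{\text{rest}}$ throughout and then use $\mathcal{C}(\mathcal{T}^{(p)})\le\mathcal{C}(\mathcal{T}^\ast)$ --- fails on two counts. First, $\mathcal{O}_{\text{rest}}$ is not guaranteed to remain available in phase~2: with $L>2$, further subsets get exhausted and parts of $\mathcal{O}_{\text{rest}}$ drop out of the search pool, so the one-step bound $c_{m+1}\ge\tfrac{1}{|\mathcal{O}_{\text{rest}}|}(\mathcal{C}(\mathcal{O}_{\text{rest}})-\mathcal{C}(\mathcal{T}^{(m)}))$ is not valid there. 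Second, even if it were, telescoping against $\mathcal{O}_{\text{rest}}$ only controls $\mathcal{C}(\mathcal{O}_{\text{rest}})$, not $\mathcal{C}(\mathcal{T}_{OPT})$; you then appeal to subadditivity $\mathcal{C}(\mathcal{T}_{OPT})\le\mathcal{C}(\mathcal{O}_1)+\mathcal{C}(\mathcal{O}_{\text{rest}})$ and try to bound $\mathcal{C}(\mathcal{O}_1)$ by $\mathcal{C}(\mathcal{T}^{(p)})$, but there is no reason the greedy prefix dominates the value of $\mathcal{O}_1$. The sentence where you say the $M_j$ greedy picks from an exhausted $\mathcal{S}_j$ ``can absorb (via submodular exchange) the $M_j$ optimal picks from $\mathcal{S}_j$'' is exactly the right idea --- it \emph{is} the bijection --- but you treat it as a side remark to be made rigorous later rather than as the proof itself. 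Drop the pivot $p$ entirely, build the per-subset bijection, and do one telescoping sum; the $1/2$ falls out in three lines.
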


The proof of this theorem is presented in Appendix~\ref{append:c}. Although Theorem~\ref{thm:Perf_Guar_Hetero_Half} gives a constant lower bound of achieving at least $50\%$ of the optimal performance, we can show tighter bounds when the number of sets is restricted to two sets $\mathcal{S}_1$ and $\mathcal{S}_2$, and the number of sensors with low noise characteristics is vastly outnumbered by the number of low-precision sensors with high noise. This result is presented in the next theorem.

\begin{theorem}\label{thm: Perf_Guar_Hetero}
    Consider the optimization problem in (\ref{eq:relaxed_prob}) where $\mathcal{C}$ is assumed to be a normalized monotone, non-decreasing, submodular function. Let the JGS algorithm 
    at the $m_s$-th iteration select $M_i$ elements from the set $\mathcal{S}_i$, for $i \in \{1,2\}$. Let $\mathcal{S}_{i'}$, where $i' \coloneqq \{1,2\} \backslash \{i\}$, denote the set complementary to $\mathcal{S}_i$. Then the cost of the final set $\mathcal{T}^\ast$ selected by the JGS algorithm is bounded as
    \begin{multline}
        \mathcal{C} \left( \mathcal{T}^\ast \right) \ge \left[1 - \frac{M_i}{M_{i'}} \sum_{j=0}^{M_i + M_{i'} - m_s - 1} \left( 1 - \frac{1 + M_i}{M_{i'}} \right)^j - \right. \\ 
        \left. \left( 1 - \frac{1 + M_i}{M_{i'}} \right)^{M_i + M_{i'} - m_s} \left( 1 - \frac{1}{M_i + M_{i'}} \right)^{m_s} \right] \mathcal{C} \left( \mathcal{T}_{OPT} \right).\\
        \label{thrm_1_1}
    \end{multline}
\end{theorem}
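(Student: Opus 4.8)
The plan is to adapt the Nemhauser--Wolsey--Fisher recursion, treating the two phases of JGS --- before and after the exhausted set $\mathcal{S}_i$ forces the search space to shrink --- separately. I would write $\mathcal{T}^{(m)}$ for the sensor set held after $m$ iterations (so $\mathcal{T}^{(0)}=\varnothing$ and $\mathcal{T}^{(M_1+M_2)}=\mathcal{T}^\ast$), set $\delta_m\coloneqq\mathcal{C}(\mathcal{T}^{(m)})-\mathcal{C}(\mathcal{T}^{(m-1)})\ge 0$ for the $m$-th marginal gain and $a_m\coloneqq\mathcal{C}(\mathcal{T}_{OPT})-\mathcal{C}(\mathcal{T}^{(m)})$ for the residual gap, and split the optimum as $\mathcal{T}_{OPT}=\mathcal{O}_i\cup\mathcal{O}_{i'}$ with $\mathcal{O}_i\coloneqq\mathcal{T}_{OPT}\cap\mathcal{S}_i$ and $\mathcal{O}_{i'}\coloneqq\mathcal{T}_{OPT}\cap\mathcal{S}_{i'}$, so $|\mathcal{O}_i|=M_i$ and $|\mathcal{O}_{i'}|=M_{i'}$ by feasibility. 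During iterations $1,\dots,m_s$ both sets still have free slots, so the greedy step ranges over all of $\mathcal{N}\setminus\mathcal{T}^{(m)}$; here the usual estimate ($\mathcal{C}(\mathcal{T}_{OPT})\le\mathcal{C}(\mathcal{T}^{(m)}\cup\mathcal{T}_{OPT})$, then submodular telescoping over the at most $M_1+M_2$ elements of $\mathcal{T}_{OPT}\setminus\mathcal{T}^{(m)}$, each contributing at most $\delta_{m+1}$) yields $a_m\le(M_1+M_2)(a_m-a_{m+1})$, hence $a_{m+1}\le(1-\tfrac1{M_1+M_2})a_m$ and, after $m_s$ steps, $a_{m_s}\le(1-\tfrac1{M_1+M_2})^{m_s}\mathcal{C}(\mathcal{T}_{OPT})$.

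For the remaining $p\coloneqq M_i+M_{i'}-m_s$ iterations the search is confined to $\mathcal{S}_{i'}$, and the analysis has to change. Fixing $m$ with $m_s\le m\le M_1+M_2-1$, I would telescope as before but split the sum: $a_m\le\sum_{j\in\mathcal{O}_{i'}\setminus\mathcal{T}^{(m)}}\Delta_j(\mathcal{T}^{(m)})+\sum_{j\in\mathcal{O}_i\setminus\mathcal{T}^{(m)}}\Delta_j(\mathcal{T}^{(m)})$, where $\Delta_j(\mathcal{A})\coloneqq\mathcal{C}(\mathcal{A}\cup\{j\})-\mathcal{C}(\mathcal{A})$. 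Each $j$ in the first sum is still a candidate, so $\Delta_j(\mathcal{T}^{(m)})\le\delta_{m+1}$, and there are at most $M_{i'}$ of them. For $j$ in the second sum the element is no longer selectable, but it \emph{was} selectable at iteration $m_s$ --- that iteration filled the last free slot of $\mathcal{S}_i$, and $j\notin\mathcal{T}^{(m_s-1)}$ as $\mathcal{T}^{(m_s-1)}\subseteq\mathcal{T}^{(m)}$ --- so $\Delta_j(\mathcal{T}^{(m_s-1)})\le\delta_{m_s}$, and diminishing returns with $\mathcal{T}^{(m_s-1)}\subseteq\mathcal{T}^{(m)}$ gives $\Delta_j(\mathcal{T}^{(m)})\le\delta_{m_s}$; there are at most $M_i$ of these. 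The decisive move is then to write $\delta_{m_s}\le\mathcal{C}(\mathcal{T}^{(m_s)})\le\mathcal{C}(\mathcal{T}^{(m)})=\mathcal{C}(\mathcal{T}_{OPT})-a_m$ (normalization and monotonicity), so that
\[
a_m\ \le\ M_{i'}\,(a_m-a_{m+1})\ +\ M_i\bigl(\mathcal{C}(\mathcal{T}_{OPT})-a_m\bigr),
\]
which rearranges to the affine recursion $a_{m+1}\le\rho\,a_m+\tfrac{M_i}{M_{i'}}\mathcal{C}(\mathcal{T}_{OPT})$ with $\rho\coloneqq 1-\tfrac{1+M_i}{M_{i'}}$. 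Charging the ``unreachable'' optimal $\mathcal{S}_i$-sensors to $\delta_{m_s}$ and then to $\mathcal{C}(\mathcal{T}^{(m)})$, rather than to the running gain $\delta_{m+1}$, is exactly what upgrades the effective decay factor from $1-\tfrac1{M_{i'}}$ to $\rho$.

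To finish, assuming $M_{i'}\ge M_i+1$ so that $\rho\in[0,1)$ --- the heterogeneous regime of interest, in which a few high-precision sensors are outnumbered by many low-precision ones --- I would unroll the affine recursion over the $p$ phase-2 iterations starting from the phase-1 estimate of $a_{m_s}$, obtaining
\[
a_{M_1+M_2}\ \le\ \rho^{\,p}\Bigl(1-\tfrac1{M_1+M_2}\Bigr)^{m_s}\mathcal{C}(\mathcal{T}_{OPT})\ +\ \tfrac{M_i}{M_{i'}}\Bigl(\textstyle\sum_{j=0}^{p-1}\rho^{\,j}\Bigr)\mathcal{C}(\mathcal{T}_{OPT}),
\]
and since $\mathcal{C}(\mathcal{T}^\ast)=\mathcal{C}(\mathcal{T}_{OPT})-a_{M_1+M_2}$ with $p=M_i+M_{i'}-m_s$ and $M_1+M_2=M_i+M_{i'}$, this is precisely~\eqref{thrm_1_1}. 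I expect the main obstacle to be the phase-2 estimate for $j\in\mathcal{O}_i$: pinning down that these elements were candidates at exactly iteration $m_s$, and that passing their marginals first to $\delta_{m_s}$ and then to $\mathcal{C}(\mathcal{T}^{(m)})$ is legitimate under submodularity, is where the whole improvement over the $50\%$ guarantee of Theorem~\ref{thm:Perf_Guar_Hetero_Half} sits; one should also check the degenerate case $p=0$ (where the formula collapses to the homogeneous $1-(1-1/n)^n$ bound) and note that when $M_{i'}\le M_i$, so $\rho<0$, the claim is subsumed by Theorem~\ref{thm:Perf_Guar_Hetero_Half}.
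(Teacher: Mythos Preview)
Your proposal is correct and follows essentially the same route as the paper's proof: a phase-1 Nemhauser recursion over $m\le m_s$, then a phase-2 split of $\mathcal{T}_{OPT}\setminus\mathcal{T}^{(m)}$ into the still-reachable part in $\mathcal{S}_{i'}$ (bounded by $M_{i'}\delta_{m+1}$) and the unreachable part in $\mathcal{S}_i$ (pushed back to iteration $m_s$ via submodularity and bounded by $M_i\delta_{m_s}\le M_i\,\mathcal{C}(\mathcal{T}^{(m)})$), yielding the same affine recursion the paper unrolls. The only cosmetic difference is that you track the residual gap $a_m=\mathcal{C}(\mathcal{T}_{OPT})-\mathcal{C}(\mathcal{T}^{(m)})$ whereas the paper tracks $\mathcal{C}(\mathcal{T}_m)$ directly; your explicit remark that the unrolling needs $\rho\ge 0$ (i.e.\ $M_{i'}\ge M_i+1$) is a point the paper leaves implicit.
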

Proof of the theorem is deferred to Appendix~\ref{append:b}.

As seen from Theorem \ref{thm: Perf_Guar_Hetero}, 
The bounds in Theorem \ref{thm: Perf_Guar_Hetero} depend on $M_1$, $M_2$, and $m_s$. Thus, unlike the GS algorithm for homogeneous sensor networks, we cannot give a single number that specifies the worst-case error. Next, we discuss how the worst-case error varies with $M_1$, $M_2$, and $m_s$.

\subsection{Analysis of Performance Guarantee}
We assess the theoretical bounds in \eqref{thrm_1_1} by fixing $m_s$ and varying $M_1$ and $M_2$. For the sake of discussion, we assume that $M_1$ is reached first by the algorithm. Otherwise, $M_1$ and $M_2$ will be interchanged in the discussion and the following results. With $M_1$ samples selected first from set $\mathcal{S}_1$, we have that $M_1 \le m_s \le M_1 + M_2 - 1$.

\begin{figure}[t]
    \centering
    \includegraphics[width=2.5 in]{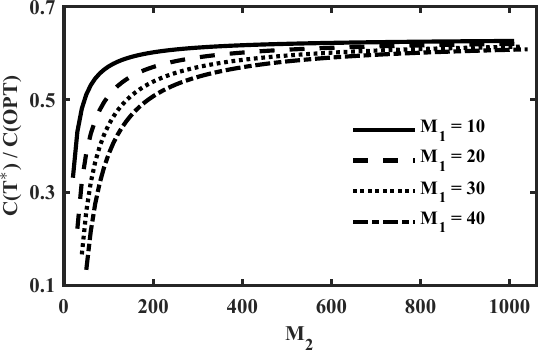}
    \caption{Plot of the theoretical worst case error for varying $M_1$ and $M_2$, when $m_s = M_1 + M_2 - 1$.}
    \label{fig:bound_interp_2}
    \vspace{-0.1in}
\end{figure}

Figure~\ref{fig:bound_interp_2} shows the performance guarantees when $M_1$ and $M_2$ vary, with $ m_s = M_1 + M_2 - 1$. We observe that the worst case performance asymptotically approaches the value $\left( 1 - 1/e \right) \approx 63\%$ as $M_1$ gets progressively smaller than $M_2$, with $\frac{M_1}{M_2} \rightarrow 0$.
Specifically, we observe that for $\frac{M_1}{M_2} = O\left( 10^{-2} \right)$ the worst case performance reaches $\approx 63\%$ of the optimal performance. For relatively larger values of $M_1$ compared to $M_2$, such as $\frac{M_1}{M_2} \approx 0.1$, the selected set has a performance which is at worst $55\%$ of the optimal performance.

\begin{figure}[t]
    \centering
    \includegraphics[width= 2.5 in]{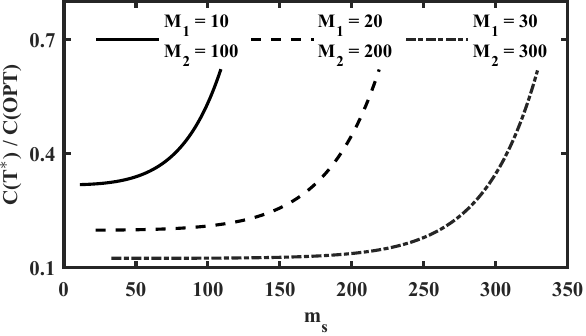}
    \caption{Plots of the theoretical worst case error guarantee for the proposed algorithm, when $m_s$ is varied from $M_1$ to $\left( M_1 + M_2 - 1 \right)$, for different values of $M_1$ and $M_2$.}
    \label{fig:bound_interp_1}
    \vspace{-0.1in}
\end{figure}

Figure~\ref{fig:bound_interp_1} shows the variation of the performance guarantee when $M_1$ and $M_2$ are kept constant and $m_s$ is varied. As the figure shows, the worst-case performance error for lower values of $m_s$ is not very encouraging. This is due to the approximations we make in deriving the bound and it may be improved if these approximations can be improved. However, lower values of $m_s$ are less likely since this scenario implies very little correlation among the measurements in $\mathcal{S}_1$ and $\mathcal{S}_2$. 
It is observed from Fig.~\ref{fig:bound_interp_1} that the performance remains flat for lower values of $m_s$ (for $m_s = M_1$ till about $50-60\%$ of $M_1 + M_2$), and then the performance increases as $m_s$ increases beyond this value until it reaches $(1 - 1/e)$ of the optimal performance. These results imply that the algorithm performs better when the switch in the search space occurs later in the algorithm. From Fig.~\ref{fig:bound_interp_2} and Fig.~\ref{fig:bound_interp_1}, we conclude that the theoretical performance guarantee obtained by the JGS algorithm is similar to the standard greedy method when $m_s$ takes higher values. Higher values of $m_s$ mean that JGS utilizes the correlation information between the two subsets for more number of steps. 

This analysis shows that while the JGS algorithm is guaranteed to produce a performance which is at worst $50\%$ of the optimum, this guarantee can be improved to $63\%$ under the special case when there are only two sets with the number of sensors to select from one set much smaller than the other. This situation practically arises when the two sets are composed of expensive high-precision sensors and easily available inexpensive low-precision sensors, respectively. In this case, the idea is to use a small number of high-precision sensors and compensate for the low sample density with an adequate number of low-precision sensors.

Next, we introduce a submodular performance metric for heterogeneous networks based on the frame potential. We obtain related performance guarantees for widely used error metrics such as the mean squared error.

\section{Performance Metrics and Submodular Surrogates} \label{sec:WFP}

In typical estimation tasks, MSE is employed as the performance measure. However, it is not submodular. Hence, theoretical guarantees derived in the previous section for greedy algorithms are not applicable. In practice, instead of MSE, its proxies such as Cram\'er-Rao lower bound (CRLB)-based (defined as the inverse of the Fisher information matrix (FIM) \cite{kay1993fundamentals}) or frame-potential (FP)-based \cite{ranieri2014near, Tohidi_2019_Sparse_antenna} functions are used with greedy algorithms. These functions have desired submodularity and monotonicity properties. However, a natural question is whether minimizing these proxies results in lower MSE. Additionally, which proxies are better suited for the sensor selection in an HSN? We answer these questions in this section.  

Since the CRLB gives a lower bound for MSE, different submodular functions of the CRLB matrix, such as $\text{trace}(\text{CRLB})$, $- \log \det (\text{CRLB})$, and the largest eigenvalue $\lambda_{\text{max}}(\text{CRLB})$ are commonly used as surrogates for MSE \cite{Tohidi_2019_Sparse_antenna}. In the greedy algorithms such as Algorithms~\ref{alg:greedy_alg_gen} and \ref{alg:JGS}, where sensors are sequentially selected starting from a null set, CRLB-based functions may not result in meaningful costs in the first few interactions. For example, consider the linear measurements as in \eqref{eq: problem_model_linear} and let $\mathbf{y}_\mathcal{S}$ be the subset of measurements at a given iteration. Then FIM for estimation of $\mathbf{x}$ from $\mathbf{y}_\mathcal{S}$ is given as $\mathbf{A}_{\mathcal{S}}^{T}\mathbf{A}_{\mathcal{S}}$ where $\mathbf{A}_{\mathcal{S}} \in \mathbb{R}^{|\mathcal{S}|\times K}$ is a matrix consisting of rows of $\mathbf{A}$ that are indexed by the set $\mathcal{S}$. The matrix is singular if $|\mathcal{S}| < K$; hence, CRLB can not be defined. This results in an arbitrary selection of measurements in any greedy algorithm's first $K-1$ iterations. Alternative cost functions such as FP are used to avoid such scenarios, as discussed next.

An FP-based cost depends on the correlation between the measurements \cite{ranieri2014near, Tohidi_2019_Sparse_antenna}, and it does not suffer from the singularity issue. Since correlated measurements add less information about the unknown vector to be estimated, FP-based cost can be minimized for subset selection. Submodular cost functions based on FP are defined for linear and non-linear homogeneous measurements in \cite{ranieri2014near} and \cite{Tohidi_2019_Sparse_antenna}, respectively, where in the latter case, it is assumed that $\mathcal{A}$ is a differentiable function of $\mathbf{x}$. We extend these definitions for heterogeneous measurements and introduce weighted frame potential (WFP) as a performance metric. 

In a greedy selection step, while selecting a new sensor (or measurement) from the remaining ones, any measurement that is highly correlated with the selected sensors or has low SNR should be given less consideration. Following this logic, we define WFP as 
\begin{equation}
    \text{WFP}\left(\mathcal{S}\right) \coloneqq \sum_{i,j \in \mathcal{S}} w_i w_j \frac{\left| \left< \nabla_{\mathbf{x}} y_i, \nabla_{\mathbf{x}} y_j \right> \right|^2}{\| \nabla_{\mathbf{x}} y_i \|_2^2 \| \nabla_{\mathbf{x}} y_j \|_2^2},
    \label{def: WMFP_NL}
\end{equation}
where $\nabla_{\mathbf{x}} y_i$ denotes the gradient of the $i$-th observation with respect to the parameter vector $\mathbf{x}$, and $w_i \coloneqq \phi \left( \sigma_i \right)$ are non-negative weights dependent on the noise variance $\sigma_i$ corrupting the $i$-th measurement. The definition is an extension of the modified frame potential introduced in \cite{Tohidi_2019_Sparse_antenna} and applies to non-linear and linear heterogeneous measurements. In the case of a linear measurement system $\mathbf{A}$, the WFP metric reduces to the simpler form,
\begin{equation}
    \text{WFP}\left(\mathcal{S}\right) \coloneqq \sum_{i,j \in \mathcal{S}} w_i w_j \frac{\left| \left< \mathbf{a}_i, \mathbf{a}_j \right> \right|^2}{\| \mathbf{a}_i \|_2^2 \| \mathbf{a}_j \|_2^2},
    \label{def: WMFP}
\end{equation}
where $\mathbf{a}_i$ denotes the $i$-th row of $\mathbf{A}$. 

The weighting function $\phi \left( \sigma_i \right)$ needs to be chosen such that the measurements corrupted by low noise are preferred for selection. A natural choice is $\phi \left( \sigma_i \right) = 1 / \sigma_i$. This is also the weighting function used in maximum likelihood estimation techniques \cite{kay1993fundamentals}. However, when $\sigma_i \approx 0$, that is when the noise affecting a measurement is negligible, this makes the corresponding terms in WFP extremely large. Since our aim is to minimize the correlation among samples, having $\phi \left( \sigma_i \right) = 1 / \sigma_i$ essentially ensures that the measurements with very low noise are never selected. To address this issue, we choose $\phi \left( \sigma_i \right): \mathbb{R}^+ \rightarrow [0,1]$ to be a function that maps the noise variance to a value within $[0,1]$, such that more weight is given to measurements with low noise. Several such functions have been used, such as $ \frac{1}{1 + \sigma_i}$, $ \frac{1}{1 + \exp \left( - \left( \sigma_i - \Bar{\sigma} \right) \right)}$, and $ \frac{\tanh \left( \sigma_i - \Bar{\sigma} \right)}{2}$, with $\Bar{\sigma}$ representing the mean of the noise variances of all the measurements. It is observed empirically that the sigmoid function, $\frac{1}{1 + \exp \left( - \left( \sigma_i - \Bar{\sigma} \right) \right)}$, gives the best performance among all the $\phi \left( \sigma_i \right)$ used. Hence, we select this as the weighting function for our experiments.    
 
Next, we define an FP-based submodular performance metric that, when maximized, finds an approximate solution for the problem (\ref{eq:relaxed_prob}). Specifically, we define weighted frame cost (WFC) as
\begin{equation}
    \text{WFC}(\mathcal{T}) \coloneqq \text{WFP} (\mathcal{N}) - \text{WFP} \left( \mathcal{N} \backslash \mathcal{T} \right).
    \label{def: WMFC}
\end{equation}
Here, $\mathcal{N} \backslash \mathcal{T}$ denotes the set of measurements that are not in $\mathcal{T}$. In Appendix~\ref{append:a}, we show that WFC is a normalized, monotone, non-decreasing submodular function. In principle, the definition of WFC is similar to the FP-based cost functions used in \cite{ranieri2014near,Tohidi_2019_Sparse_antenna}

The cost, WFC, is used as the performance metric $\mathcal{C}$ in problem (\ref{eq:relaxed_prob}) for the simulations in the following sections. It is to be noted that the maximization of $\text{WFC}(\mathcal{T})$ results in the minimization of $\text{WFP}(\mathcal{N} \backslash \mathcal{T})$ as is apparent from the definition of WFC given by \eqref{def: WMFC}. Since our objective is to minimize the WFP of the selected set, $\mathcal{N} \backslash \mathcal{T}$ is the set of sensors the JGS algorithm selects with WFC as the performance metric.

\subsection{Near Optimality of the WFP}

From Theorem~\ref{thm:Perf_Guar_Hetero_Half}, we know that the WFC of the set $\mathcal{T}^\ast$ of sensors selected by the JGS algorithm is at worst $50\%$ of the optimal WFC. Using this fact, the WFP of the solution can also be bounded, as summarized in the next theorem.

\begin{theorem}\label{thm:WFP_opt_bound}
    If Algorithm~\ref{alg:JGS} selects the set $\mathcal{N} \backslash \mathcal{T}^\ast$ with WFC as the cost function, then
    \begin{equation}
        \text{WFP} \left( \mathcal{T}^\ast \right) \le \frac{1}{2} \left( 1 + \frac{\text{WFP} (\mathcal{N})}{\text{WFP} \left( \mathcal{T}_{OPT} \right)} \right) \text{WFP} \left( \mathcal{T}_{OPT} \right),
        \label{eq:WPF_opt_bound}
    \end{equation}
    where $\mathcal{T}_{OPT}$ is the optimal solution when WFP is minimized.
\end{theorem}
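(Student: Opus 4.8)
The plan is to reduce the statement to a single invocation of Theorem~\ref{thm:Perf_Guar_Hetero_Half} by exploiting the complementary relationship between WFP and WFC. First I would recall that WFC is a normalized, monotone, non-decreasing submodular function (this is shown in Appendix~\ref{append:a}), so Theorem~\ref{thm:Perf_Guar_Hetero_Half} applies with $\mathcal{C}=\text{WFC}$. Let $\mathcal{U}^\ast$ denote the set actually produced by Algorithm~\ref{alg:JGS} when run with WFC as the cost. By hypothesis $\mathcal{U}^\ast=\mathcal{N}\backslash\mathcal{T}^\ast$, and $\mathcal{U}^\ast$ contains exactly $M_i$ indices of each $\mathcal{S}_i$, so $\mathcal{T}^\ast$ contains exactly $|\mathcal{S}_i|-M_i$ indices of each $\mathcal{S}_i$. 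Theorem~\ref{thm:Perf_Guar_Hetero_Half} then yields $\text{WFC}(\mathcal{U}^\ast)\ge\tfrac12\,\text{WFC}(\mathcal{U}_{OPT})$, where $\mathcal{U}_{OPT}$ is an optimal solution of \eqref{eq:relaxed_prob} for the cost WFC.

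The second step is the translation between the two optimization problems. Since $\text{WFC}(\mathcal{T})=\text{WFP}(\mathcal{N})-\text{WFP}(\mathcal{N}\backslash\mathcal{T})$, maximizing WFC over feasible sets (those with $M_i$ elements of $\mathcal{S}_i$) is the same as minimizing $\text{WFP}(\mathcal{N}\backslash\mathcal{T})$ over those sets, i.e. minimizing $\text{WFP}(\mathcal{R})$ over sets $\mathcal{R}$ containing $|\mathcal{S}_i|-M_i$ elements of each $\mathcal{S}_i$. Hence $\mathcal{N}\backslash\mathcal{U}_{OPT}$ coincides with $\mathcal{T}_{OPT}$, the WFP-minimizer referred to in the statement, and $\text{WFC}(\mathcal{U}_{OPT})=\text{WFP}(\mathcal{N})-\text{WFP}(\mathcal{T}_{OPT})$; likewise $\text{WFC}(\mathcal{U}^\ast)=\text{WFP}(\mathcal{N})-\text{WFP}(\mathcal{T}^\ast)$.

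Substituting these two identities into the $\tfrac12$-guarantee gives $\text{WFP}(\mathcal{N})-\text{WFP}(\mathcal{T}^\ast)\ge\tfrac12\bigl(\text{WFP}(\mathcal{N})-\text{WFP}(\mathcal{T}_{OPT})\bigr)$, which rearranges to $\text{WFP}(\mathcal{T}^\ast)\le\tfrac12\bigl(\text{WFP}(\mathcal{N})+\text{WFP}(\mathcal{T}_{OPT})\bigr)$, and factoring out $\text{WFP}(\mathcal{T}_{OPT})$ produces exactly \eqref{eq:WPF_opt_bound}. The factoring is legitimate because $\text{WFP}(\mathcal{T}_{OPT})>0$: its diagonal terms equal $w_i^2>0$, so WFP is strictly positive on any nonempty set, and $\mathcal{T}_{OPT}\neq\varnothing$ whenever $|\mathcal{S}_i|>M_i$ for some $i$ (the non-degenerate case).

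I do not expect a genuine obstacle; the care needed is purely bookkeeping: (i) verifying that the feasible set in the ``minimize WFP'' problem is precisely the complement of the feasible set of \eqref{eq:relaxed_prob}, so that $\mathcal{T}_{OPT}$ and $\mathcal{U}_{OPT}$ are true complements and the cardinality constraints are consistent; and (ii) keeping $\mathcal{T}^\ast$ (the set the algorithm ``discards'') distinct from $\mathcal{N}\backslash\mathcal{T}^\ast$ (the sensors actually selected). The only mildly conceptual point is recognizing that a \emph{multiplicative} $\tfrac12$ guarantee on the gain WFC becomes a ratio-dependent bound on WFP, precisely because WFC measures the decrease of WFP relative to the baseline $\text{WFP}(\mathcal{N})$.
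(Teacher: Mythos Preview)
Your proposal is correct and follows essentially the same route as the paper's proof: invoke Theorem~\ref{thm:Perf_Guar_Hetero_Half} with $\mathcal{C}=\text{WFC}$, substitute the definition $\text{WFC}(\mathcal{T})=\text{WFP}(\mathcal{N})-\text{WFP}(\mathcal{N}\backslash\mathcal{T})$ on both sides, and rearrange. Your version is in fact more careful than the paper's, since you explicitly track the complementarity between the WFC-feasible and WFP-feasible sets and justify $\text{WFP}(\mathcal{T}_{OPT})>0$ before dividing by it.
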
 

\begin{proof}
    From Theorem~\ref{thm:Perf_Guar_Hetero_Half}, we have, $\text{WFC} \left( \mathcal{N} \backslash \mathcal{T}^\ast \right) \ge \frac{1}{2} \text{WFC} \left( \mathcal{N} \backslash \mathcal{T}_{OPT} \right)$. This implies,
    \begin{align}
        & \text{WFP} (\mathcal{N}) - \text{WFP} \left( \mathcal{T}^\ast \right) \ge \frac{1}{2} \left( \text{WFP} (\mathcal{N}) - \text{WFP} \left( \mathcal{T}_{OPT} \right) \right) \notag \\
        \implies & \text{WFP} \left( \mathcal{T}^\ast \right) \le \frac{1}{2} \left( \text{WFP} (\mathcal{N}) + \text{WFP} \left( \mathcal{T}_{OPT} \right) \right), \notag
    \end{align}
    and the result follows.
\end{proof}

As is evident from \eqref{eq:WPF_opt_bound}, the WFP of $\mathcal{T}^\ast$ is at most a factor $\gamma \coloneqq \frac{1}{2} \left( 1 + \frac{\text{WFP}(\mathcal{N})}{\text{WFP}(\mathcal{T}_{OPT})} \right)$ of the optimal WFP. Since $\text{WFP} (\mathcal{N}) \ge \text{WFP}(\mathcal{T}_{OPT})$, this implies $\gamma \ge 1$. The value of $\gamma$ depends on the value of $\text{WFP} (\mathcal{N}) / \text{WFP}(\mathcal{T}_{OPT})$. In the linear case, when the measurements are obtained using a measurement matrix $\mathbf{A}$, $\text{WFP}(\mathcal{N})$ can be bounded as \cite{ranieri2014near},
\begin{equation}
    \frac{1}{K}\left( \sum_{i=1}^N w_i^2 \| \mathbf{a}_i \|^2 \right)^2 \le \text{WFP} (\mathcal{N}) \le \left( \sum_{i=1}^N w_i^2 \| \mathbf{a}_i \|^2 \right)^2,\nonumber
    \label{eq:WFP_entire_mat_bound}
\end{equation}
where the lower bound is reached when $\Tilde{\mathbf{A}} \coloneqq \text{diag}\left( \{w_i\}_{i=1}^N \right) \mathbf{A}$ is a tight frame, and the upper bound is reached when $\mathbf{A}$ is a rank-one matrix.

Thus, the JGS algorithm not only achieves near-optimal WFC but also reaches near-optimal WFP, which we want to minimize. However, since MSE is the performance criterion used in most estimation tasks, we next prove a theoretical bound on the MSE achieved by the JGS algorithm when the measurements follow a linear model.

\subsection{Near Optimality of the MSE with respect to WFP}

As mentioned earlier, mean squared error is a widely used error measure to compute the performance of estimation methods. In this section, We prove theoretical bounds on the MSE achieved by the JGS algorithm when WFC is used as the performance metric and the measurement model is linear.

In order to compute the approximation factor for the MSE achieved by the JGS algorithm, we first introduce the MSE metric and the estimation method used. For an estimate $\hat{\mathbf{x}}$ of the parameter vector $\mathbf{x}$, the MSE can be defined as $\text{MSE} \coloneqq \| \mathbf{x} - \hat{\mathbf{x}} \|_2^2$. In the case when the sensing system is linear, so that $\mathcal{A}$ is a matrix $\mathbf{A}$, for a heterogeneous system with known noise covariance matrix $\boldsymbol{\Sigma}$ and a subset $\mathcal{T}$ of selected $M = \sum_{i=1}^L M_i$ measurements, the minimum MSE is achieved by the estimate
\begin{equation}
    \hat{\mathbf{x}} \coloneqq \left( \mathbf{A}_{\mathcal{T}}^H \boldsymbol{\Sigma}_{\mathcal{T}}^{-1} \mathbf{A}_{\mathcal{T}} \right)^{-1} \mathbf{A}_{\mathcal{T}}^H \boldsymbol{\Sigma}_{\mathcal{T}}^{-1} \mathbf{y}_{\mathcal{T}},
    \label{eq:MMSE_est_hetero}
\end{equation}
and the corresponding MSE is given by $\text{MSE} (\mathcal{T}) = \text{Trace} \left( \left( \mathbf{A}_{\mathcal{T}}^H \boldsymbol{\Sigma}_{\mathcal{T}}^{-1} \mathbf{A}_{\mathcal{T}} \right)^{-1} \right)$ \cite{golub1963comparison}. Here, $\mathbf{A}_{\mathcal{T}}$ (or $\mathbf{y}_{\mathcal{T}}$) denotes the rows of $\mathbf{A}$ (or $\mathbf{y}$) indexed by the set $\mathcal{T}$. The corresponding noise covariance matrix is defined as  $\boldsymbol{\Sigma}_{\mathcal{T}} \coloneqq \text{diag} \left( \left\{ \sigma_i^2 | i \in \mathcal{T} \right\} \right)$. Let us define the inverse error covariance matrix of the estimate $\hat{\mathbf{x}}$ as $\boldsymbol{\Phi}_{\mathcal{T}} \coloneqq \left( \mathbf{A}_{\mathcal{T}}^H \boldsymbol{\Sigma}_{\mathcal{T}}^{-1} \mathbf{A}_{\mathcal{T}} \right)$. Then minimizing $\text{MSE} (\mathcal{T})$ is equivalent to minimizing $\text{Trace}(\boldsymbol{\Phi}_{\mathcal{T}}^{-1})$, or equivalently minimizing $\sum_{i=1}^M 1/\lambda_i^{\mathcal{T}}$, where $\lambda_i^{\mathcal{T}}$, $i = 1, \ldots, M$, are the eigenvalues of $\boldsymbol{\Phi}_{\mathcal{T}}$. 

We derive the bounds for the MSE of the estimate only for a certain class of measurement matrices with a bounded spectrum of their corresponding $\boldsymbol{\Phi}_{\mathcal{T}}$ matrix. This class of measurement matrices is called $(\delta, M)$-bounded frames.

\begin{definition}{($(\delta,M)$-Bounded Frame) \cite{ranieri2014near}}
    A matrix $\mathbf{A} \in \mathbb{C}^{N \times K}$, with $N \ge M > K$, is said to be $(\delta, M)$-bounded if, for every $\mathcal{T} \subseteq \mathcal{N}$ such that $| \mathcal{T} | = M$, and corresponding noise covariance matrix $\boldsymbol{\Sigma}_{\mathcal{T}}$, the corresponding matrix $\boldsymbol{\Phi}_{\mathcal{T}}$ has a bounded spectrum
    \begin{equation}
        d - \delta \le \lambda_i^{\mathcal{T}} \le d + \delta,
    \end{equation}
    for all $1 \le i \le M$, $\delta > 0$, and where $d = \frac{1}{N} \sum_{i \in \mathcal{N}} \| \mathbf{a}_i \|^2$ is the average squared norm of the rows of $\mathbf{A}$.
\end{definition}

A $(\delta, M)$-bounded frame is reminiscent of the restricted isometry property in compressed sensing \cite{eldar_cs_book}. For a $(\delta, M)$-bounded class of measurement matrices, we can state the following bounds for the MSE of the estimate.

\begin{theorem} \label{thm:mse_opt_bound}
    If $\mathbf{A}$ is a $(\delta, M)$-bounded frame, and Algorithm~\ref{alg:JGS} selects the set $\mathcal{N} \backslash \mathcal{T}^\ast$ with WFC as the cost function, then
    \begin{equation}
        \text{MSE} \left( \mathcal{T}^\ast \right) \le \zeta \text{MSE}_{OPT},
        \label{eq:mse_opt_bound}
    \end{equation}
    where $\text{MSE}_{OPT}$ is the optimal MSE that can be achieved, $\zeta = \gamma \kappa$ where $\gamma$ is the approximation factor in \eqref{eq:WPF_opt_bound} and $\kappa = \frac{(d+\delta)^2}{(d-\delta)^2} \frac{\alpha_{\max}}{\alpha_{\min}}$, with $\alpha_{\max} = \underset{\mathcal{T} \subset \mathcal{N}, |\mathcal{T}| = M}{\max} \quad \sum_{i \in \mathcal{T}} w_i^2 \| \mathbf{a}_i \|^2$, and $\alpha_{\min} = \underset{\mathcal{T} \subset \mathcal{N}, |\mathcal{T}| = M}{\min} \quad \sum_{i \in \mathcal{T}} w_i^2 \| \mathbf{a}_i \|^2$.
\end{theorem}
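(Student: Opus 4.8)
The plan is to combine the WFP near-optimality from Theorem~\ref{thm:WFP_opt_bound} with the fact that, for a $(\delta,M)$-bounded frame, every feasible selection has an inverse error-covariance matrix whose spectrum is confined to a narrow band; this makes the MSE (a sum of reciprocal eigenvalues) comparable, up to the constants in $\kappa$, to spectral quantities that can be controlled uniformly over all feasible subsets, so that a bound on the WFP of the returned set transfers to a bound on its MSE.

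First I would record the spectral sandwich. Writing $\boldsymbol{\Phi}_{\mathcal{T}} = \sum_{i\in\mathcal{T}} w_i^2\,\mathbf{a}_i\mathbf{a}_i^H$ for a feasible $M$-subset $\mathcal{T}$, we have $\text{MSE}(\mathcal{T}) = \text{Trace}(\boldsymbol{\Phi}_{\mathcal{T}}^{-1}) = \sum_k 1/\lambda_k^{\mathcal{T}}$ with $\lambda_k^{\mathcal{T}} \in [d-\delta,\,d+\delta]$ by hypothesis. The elementary inequality $\frac{\lambda}{(d+\delta)^2} \le \frac{1}{\lambda} \le \frac{\lambda}{(d-\delta)^2}$, valid for $\lambda\in[d-\delta,\,d+\delta]$ (each side being just $(d-\delta)^2\le\lambda^2\le(d+\delta)^2$), summed over $k$, gives
\[
\frac{\text{Trace}(\boldsymbol{\Phi}_{\mathcal{T}})}{(d+\delta)^2} \;\le\; \text{MSE}(\mathcal{T}) \;\le\; \frac{\text{Trace}(\boldsymbol{\Phi}_{\mathcal{T}})}{(d-\delta)^2}
\]
for \emph{every} feasible $\mathcal{T}$. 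Next I would pin the trace: $\text{Trace}(\boldsymbol{\Phi}_{\mathcal{T}}) = \sum_{i\in\mathcal{T}} w_i^2\|\mathbf{a}_i\|^2$, which lies in $[\alpha_{\min},\,\alpha_{\max}]$ by the definition of these quantities. Applying the upper bound at $\mathcal{T}^\ast$ and the lower bound at an MSE-optimal set $\mathcal{T}_{OPT}$ yields $\text{MSE}(\mathcal{T}^\ast) \le \alpha_{\max}/(d-\delta)^2$ and $\text{MSE}_{OPT} \ge \alpha_{\min}/(d+\delta)^2$, hence $\text{MSE}(\mathcal{T}^\ast) \le \kappa\,\text{MSE}_{OPT}$; since Theorem~\ref{thm:WFP_opt_bound} forces $\gamma\ge1$, the stated bound $\text{MSE}(\mathcal{T}^\ast)\le\gamma\kappa\,\text{MSE}_{OPT}=\zeta\,\text{MSE}_{OPT}$ follows. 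If one instead wants $\gamma$ to enter essentially, the same sandwich can be routed through WFP: write $\text{WFP}(\mathcal{T})=\sum_k(\mu_k^{\mathcal{T}})^2$ for the normalized weighted Gram matrix, use Theorem~\ref{thm:WFP_opt_bound} to pass from $\mathcal{T}^\ast$ to the WFP-optimal set and then, by feasibility, to $\mathcal{T}_{OPT}$, and convert each passage into a multiplicative constant using the confined spectrum; the constants collect into $\gamma\kappa$.

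The step I expect to be the main obstacle is making the second ingredient rigorous, namely aligning the weighting conventions. The minimum-MSE estimator in~\eqref{eq:MMSE_est_hetero} is built from $\boldsymbol{\Sigma}_{\mathcal{T}}^{-1}=\text{diag}(1/\sigma_i^2)$, whereas WFP and the scalars $\alpha_{\min},\alpha_{\max}$ are written in terms of the weights $w_i=\phi(\sigma_i)$; the argument needs these to coincide (e.g.\ the natural choice $\phi(\sigma_i)=1/\sigma_i$) or to be reconciled by a scalar rescaling absorbed into $d$, and the $(\delta,M)$-bounded property must be the statement about the \emph{same} operator $\boldsymbol{\Phi}_{\mathcal{T}}$ that appears in the MSE expression. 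Once the spectra are confined to $[d-\delta,\,d+\delta]$ and the traces to $[\alpha_{\min},\,\alpha_{\max}]$, the remainder is the scalar inequalities above together with the invocation of Theorem~\ref{thm:WFP_opt_bound}; no further submodularity or greedy-specific reasoning is required.
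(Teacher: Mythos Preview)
Your proposal is correct but takes a genuinely different route from the paper. The paper's argument relates $\text{MSE}(\mathcal{T})$ to $\text{WFP}(\mathcal{T})$ via an AM--HM--type inequality on the eigenvalues of $\boldsymbol{\Phi}_{\mathcal{T}}$, obtaining the two-sided sandwich $\frac{M^2}{\alpha_{\max}}\frac{\text{WFP}(\mathcal{T})}{\lambda_{\max}^2} \le \text{MSE}(\mathcal{T}) \le \frac{M^2}{\alpha_{\min}}\frac{\text{WFP}(\mathcal{T})}{\lambda_{\min}^2}$, then invokes Theorem~\ref{thm:WFP_opt_bound} to replace $\text{WFP}(\mathcal{T}^\ast)$ by $\gamma\,\text{WFP}(\mathcal{T}_{OPT})$, and finally uses $\text{WFP}(\mathcal{T}_{OPT})$ again to lower-bound $\text{MSE}_{OPT}$. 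Your argument bypasses WFP entirely: the scalar inequality $\lambda/(d+\delta)^2 \le 1/\lambda \le \lambda/(d-\delta)^2$ on $[d-\delta,d+\delta]$, summed over the spectrum, gives $\text{Trace}(\boldsymbol{\Phi}_{\mathcal{T}})/(d+\delta)^2 \le \text{MSE}(\mathcal{T}) \le \text{Trace}(\boldsymbol{\Phi}_{\mathcal{T}})/(d-\delta)^2$, and bounding the trace by $\alpha_{\min},\alpha_{\max}$ already yields $\text{MSE}(\mathcal{T}^\ast) \le \kappa\,\text{MSE}_{OPT}$ for \emph{any} feasible $\mathcal{T}^\ast$; you then simply append the factor $\gamma\ge 1$. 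This is more elementary and in fact reveals that, under the $(\delta,M)$-bounded hypothesis alone, the greedy guarantee contributes nothing to the stated bound---the factor $\gamma$ is superfluous. The paper's route, by contrast, makes Theorem~\ref{thm:WFP_opt_bound} enter essentially and mirrors the structure of the analogous homogeneous result in~\cite{ranieri2014near}. Your caveat about reconciling the weights $w_i=\phi(\sigma_i)$ with the $1/\sigma_i$ appearing in $\boldsymbol{\Phi}_{\mathcal{T}}$ is well taken; the paper's proof glosses over the same alignment.
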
 

\begin{proof}
    We know that, $\text{WFP} (\mathcal{T}) = \text{Trace} \left( \boldsymbol{\Phi}_{\mathcal{T}}^H \mathbf{W}_{\mathcal{T}}^H \mathbf{W}_{\mathcal{T}} \boldsymbol{\Phi}_{\mathcal{T}} \right) = \sum_{j=1}^M \left| w_j \lambda_j^{\mathcal{T}} \right|^2$, where $W_{\mathcal{T}} = \text{diag} \left( \left\{ \sigma_i w_i | i \in \mathcal{T} \right\} \right)$ and $w_j$ is a linear combination of the elements in $\left\{ \sigma_i w_i | i \in \mathcal{T} \right\}$. Using the relationship \cite{sharma2008some} between the arithmetic mean, harmonic mean, and the standard deviation of the $\left\{ \lambda_i^{\mathcal{T}} \right\}_{i=1}^M$, and the fact that $w_i \le 1$, for $1 \le i \le N$, we get,
    \begin{equation}
        \frac{M^2}{\alpha_{\max}} \frac{\text{WFP} (\mathcal{T})}{\lambda_{\max}^2} \le \text{MSE} (\mathcal{T}) \le \frac{M^2}{\alpha_{\min}} \frac{\text{WFP} (\mathcal{T})}{\lambda_{\min}^2},
        \label{eq:MSE_am_hm_bound}
    \end{equation}
    where $\lambda_{\max}$ and $\lambda_{\min}$ are the maximum and minimum eigenvalues of $\boldsymbol{\Phi}_{\mathcal{T}}$ respectively. This implies that,
    \begin{equation}
        \text{MSE} \left( \mathcal{T}^\ast \right) \le \frac{M^2}{\alpha_{\min}} \frac{\gamma \text{WFP} \left( \mathcal{T}_{OPT} \right)}{(d-\delta)^2},
        \label{eq:MSE_upp_bound}
    \end{equation}
    and,
    \begin{equation}
        \text{MSE}_{OPT} \ge \frac{M^2}{\alpha_{\max}} \frac{\text{WFP} \left( \mathcal{T}_{OPT} \right)}{(d+\delta)^2},
        \label{eq:MSE_low_bound}
    \end{equation}
    Combining \eqref{eq:MSE_upp_bound} and \eqref{eq:MSE_low_bound}, we get \eqref{eq:mse_opt_bound}.
\end{proof}

The proof of Theorem \ref{thm:mse_opt_bound} follows an approach similar to the one used in \cite[Theorem~3]{ranieri2014near} where $\text{FP}(\mathcal{T})$ is used instead of $\text{WFP}(\mathcal{T})$. Note that $\text{FP}(\mathcal{T})$ is a function of $\boldsymbol{\Phi}_{\mathcal{T}}$ and hence its eigenvalues. On the other hand, $\text{WFP}(\mathcal{T})$ is a function of $\mathbf{W}_{\mathcal{T}} \boldsymbol{\Phi}_{\mathcal{T}}$, which is not a function of the eigenvalues of $\boldsymbol{\Phi}_{\mathcal{T}}$. Thus, the major challenge in deriving \eqref{eq:MSE_am_hm_bound}. 

A bound on the MSE has also been obtained in homogeneous sensor selection by exploiting its weak submodularity property and using a randomized greedy algorithm \cite{hashemi_2020_randomized}. This bound depends on the curvature \cite[Definition~3]{hashemi_2020_randomized} of the MSE. Such results can be extended to the heterogeneous setting but are not explored in this work due to space constraints. Instead, from Theorem~\ref{thm:mse_opt_bound}, we can conclude that the MSE of the solution is bounded compared to the optimal MSE since the WFP achieved is bounded with respect to the optimal WFP, that is, $\gamma$ is bounded. 

In the next section, we experimentally validate our results for both linear and non-linear measurement models.

\section{Experiments and Results} \label{sec:Experiments}
This section presents a detailed analysis of the proposed algorithm through simulations for linear and nonlinear measurement models. The results are compared with different methods such as greedy selection, random selection, independent greedy selection, and independent random selection (cf. Table \ref{tab: MSE_measures} for the list of methods and see Section~\ref{sec:JGS} for details). We apply an exhaustive search to find an optimal solution for a small-scale problem. The methods are compared in terms of normalized mean-squared error measured as
\begin{align}
    \text{MSE} = 10 \log_{10} \left(\frac{\|\mathbf{x} - \mathbf{\hat{x}}\|^2 }{\|\mathbf{x}\|^2} \right),
    \label{eq:mse}
\end{align}
where $\mathbf{\hat{x}}$ is an estimate of $\mathbf{x}$ given by \eqref{eq:MMSE_est_hetero}.

\begin{table}
    \caption{Different methods used for comparing the results}
    \label{tab: MSE_measures}
    \begin{center}
    \adjustbox{max width=\linewidth}{%
    \begin{tabular}{|l|l|}
        \hline
        \texttt{OPT} & Optimal Solution (Exhaustive Search) \\
        \hline
        \texttt{JGS} & Joint Greedy Selection Algorithm \\
        \hline
        \texttt{GS} & Greedy Selection Algorithm \\
        \hline
        \texttt{IGS} & Independent Greedy Selection Algorithm \\
        \hline
        \texttt{IRS} & Independent Random Selection Algorithm\\
        \hline
        \texttt{RS} & Random Selection Algorithm \\
        \hline
        
    \end{tabular}
    }
    
    \end{center}
\end{table}

\subsection{Linear Measurement Model}

We first present simulation results when the measurement model is linear in nature as in \eqref{eq: problem_model_linear}. In the experiments, $\mathbf{A}$ is a subsampled discrete cosine transform (DCT) matrix obtained by randomly selecting $K$ columns from a $N \times N$ DCT matrix. The entries of the parameter vector $\mathbf{x}\in\mathbb{R}^K$ are independent and identically distributed (i.i.d.) zero-mean Gaussian random variables with variance 25. For the additive noise model in \eqref{eq: problem_model_linear}, the $L$ noise variances are specified in terms of the SNRs as
\begin{equation}
    \text{SNR}_i \coloneqq \frac{1}{N_i \sigma_i^2} \sum_{j \in \mathcal{S}_i} \left( [\mathcal{A}(\mathbf{x})]_j \right) ^2, \quad \text{for } i =1, \ldots, L, 
    \label{eq:SNR_i}
\end{equation}
where $[\mathcal{A}(\mathbf{x})]_j$ is the noise-free measurement at the $j$-th sensor. 
MSEs are averaged over $1000$ Monte-Carlo iterations for each set of SNRs. For each Monte Carlo iteration, $\mathbf{x}$, $\mathbf{A}$, and $\text{SNR}_i$ are kept fixed, while the noise covariance matrix $\boldsymbol{\Sigma}$ is changed by randomly choosing the locations of the $N_i$ sensors in each set $\mathcal{S}_i$ out of the total $N$ possible locations.

\begin{figure}
    \centering
    \includegraphics[width= 2.6 in]{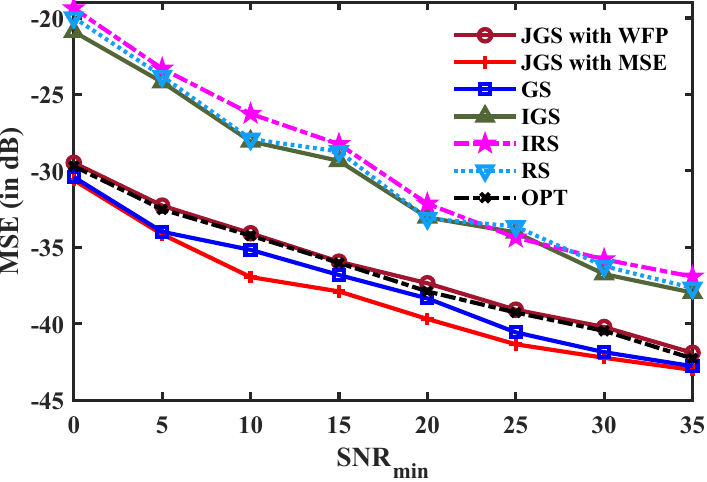}
    \caption{
    Small-scale problem with linear measurements in additive noise: Comparison of MSEs achieved by the different algorithms. Parameters: $L = 3$, $M_1 = 3$, $M_2 = 5$, $M_3 = 2$, $N_1 = 5$, $N_2 = 10$, $N_3 = 5$, $K = 5$, $\text{SNR}_{1} = 40$ dB, $\text{SNR}_{2} = \text{SNR}_{\min}$, and $\text{SNR}_{3} = \left( \text{SNR}_{1} + \text{SNR}_{2} \right) / 2$. The JGS has lower MSE than IGS, IRS, and RS by $5$-$8$ dB, and higher MSE than GS and OPT by $0.2$-$2$ dB.}
    \label{fig:Perf_Comp_N_15_LF}
    \vspace{-0.1in}
\end{figure}

For a small-scale problem, the performances of the different algorithms mentioned in Table~\ref{tab: MSE_measures} are compared with the optimal solution $\mathcal{T}_{OPT}$ obtained through an exhaustive search. In this problem, the goal is to select $M_1 = 3$ sensors from $N_1 = 5$ sensors, $M_2 = 5$ out of $N_2 = 10$ sensors, and $M_3 = 2$ sensors from $N_3 = 5$ sensors for $K = 5$. We fix $\text{SNR}_1 = 40$ dB and vary $\text{SNR}_2$ from $0$ dB to $35$ dB. $\text{SNR}_3$ is chosen as the average of $\text{SNR}_1$ and $\text{SNR}_2$. The MSEs of different methods are shown in Fig.~\ref{fig:Perf_Comp_N_15_LF}. We observe that compared to IGS and the random selection methods (RS and IRS), GS and the proposed JGS method have $5$-$8$ dB lower error. Among the performance metrics used, MSE as an objective function gives $0.5-2.5$ dB lower MSE as compared to the WFC metric. This shows that JGS performs equally well with non-submodular metrics, although no theoretical guarantees are provided for the same. GS results in $0.5-2$ dB lower error among GS and JGS methods. This is intuitively reasonable since the heterogeneous sensor selection problem is more constrained than the homogeneous sensor selection problem. Specifically, by applying the GS algorithm, we can select $M$ measurements from $N$ measurements but can not ensure that out of the selected sensors, $M_1$, $M_2$ and $M_3$ are selected from the sets $\mathcal{S}_1$, $\mathcal{S}_2$ and $\mathcal{S}_3$ respectively. As the difference between $\text{SNR}_1$ and $\text{SNR}_2$ decreases, the difference between the performances of GS and JGS reduces as the sets become more homogeneous in nature. The optimal solution gives an MSE about $0.2-0.5$ dB less than that JGS achieves. Thus, the proposed algorithm achieves near-optimal performance.

\begin{figure}[tb]
    \centering
        \includegraphics[width=3.2 in]{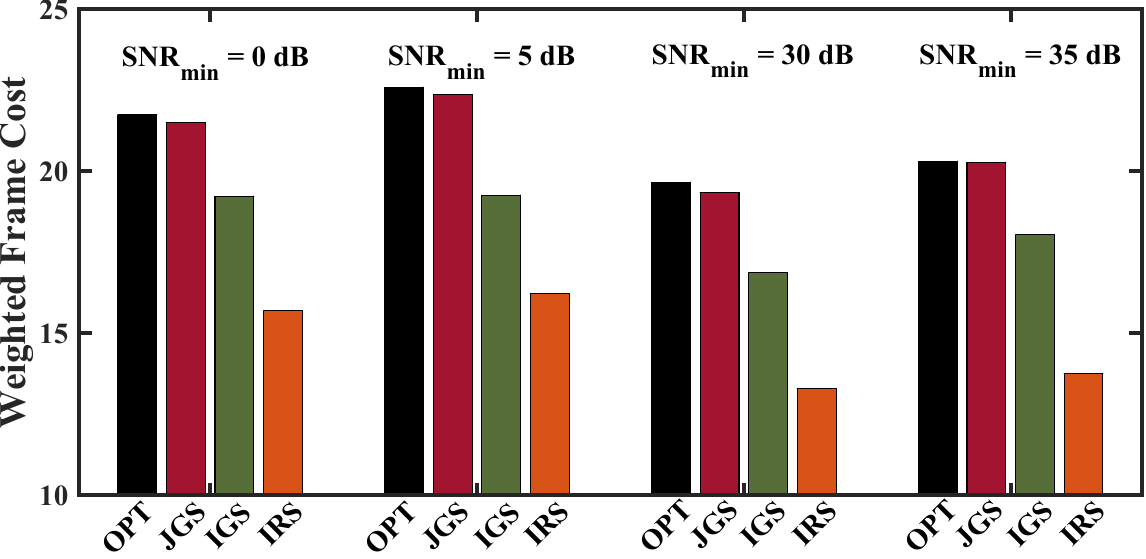}
    \caption{Comparison of WFCs of JGS, IGS, and IRS with respect to the optimal performance when $\text{SNR}_2 = \text{SNR}_{\min} = 0, 5, 30, 35$ dB. Parameters: $M_1 = 3$, $M_2 = 5$, $M_3 = 2$, $N_1 = 5$, $N_2 = 10$, $N_3 = 5$, and $K = 5$. $\text{SNR}_1 = 40$ dB and $\text{SNR}_3 = \left( \text{SNR}_1 + \text{SNR}_2 \right) / 2$. JGS gives near-optimal WFC, while the WFC of IGS and IRS are comparatively lower.}
    \label{fig:Alg_Comp_w_OPT_LF}
    \vspace{-0.1 in}
\end{figure}

While simulating the small-scale problem, we also take the opportunity to verify the weighted frame cost achieved by JGS, IGS, and IRS as compared to the optimal WFC. Figure~\ref{fig:Alg_Comp_w_OPT_LF} shows the WFCs for $\text{SNR}_1 = 40$ dB and $\text{SNR}_2 = 0$ dB, $5$ dB, $30$ dB, and $35$ dB. We observe that the solution sets obtained by the JGS algorithm achieve performances equivalent to the optimal. In comparison, IGS and IRS produce lower WFCs. This experiment shows that, in practice, the proposed JGS algorithm achieves near-optimal performance. Figure~\ref{fig:Perf_Comp_N_15_LF} and Fig.~\ref{fig:Alg_Comp_w_OPT_LF} show that MSE and WFC are inversely correlated with each other, in the sense that they follow inverse trends. A lower value of WFC indicates a higher MSE.

\begin{figure}
    \centering
    \includegraphics[width= 2.6 in]{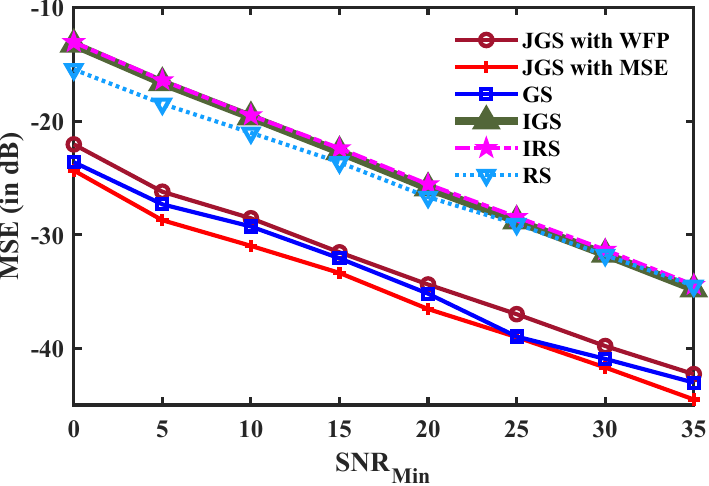}
    \caption{Large-scale problem with linear measurements in additive noise: Comparison of MSEs achieved by the different algorithms. Parameters: $M_i = \{10, 10, 10, 60, 10\}$, $N_i = \{ 25, 25, 25, 100, 25\}$, $\text{SNR}_1 = 40$ dB, $\text{SNR}_5 = \text{SNR}_{\text{Min}}$, and $\text{SNR}_2$, $\text{SNR}_3$ and $\text{SNR}_4$ are equally spaced between $\text{SNR}_1$ and $\text{SNR}_5$. JGS has lower MSE than IGS, IRS, and RS by $4$-$8$ dB, and higher MSE than GS by $0.5$-$1.5$ dB.}
    \label{fig:Perf_Comp_N_200_LF}
\end{figure}

In the next experiment, we focused on a large-scale problem with $M_i = \{ 10, 10, 10, 60, 10 \}$, $N_i = \{ 25, 25, 25, 100, 25 \}$, for $i = 1, \ldots, 5$, and $K = 30$. Due to the size of the problem, an exhaustive search could not be applied. $\text{SNR}_1$ is kept fixed at $40$ dB, while $\text{SNR}_5$ is varied from $0-35$ dB. $\text{SNR}_2$, $\text{SNR}_3$ and $\text{SNR}_4$ are chosen to be uniformly spaced between $\text{SNR}_1$ and $\text{SNR}_5$. The MSEs for the rest of the methods are compared in Fig.~\ref{fig:Perf_Comp_N_200_LF}. We observe that the JGS approach gives $4$-$8$ dB lower error than the IGS, RS, and IRS methods for different noise levels.

\begin{figure}[tb]
    \centering
    \includegraphics[width=2.6 in]{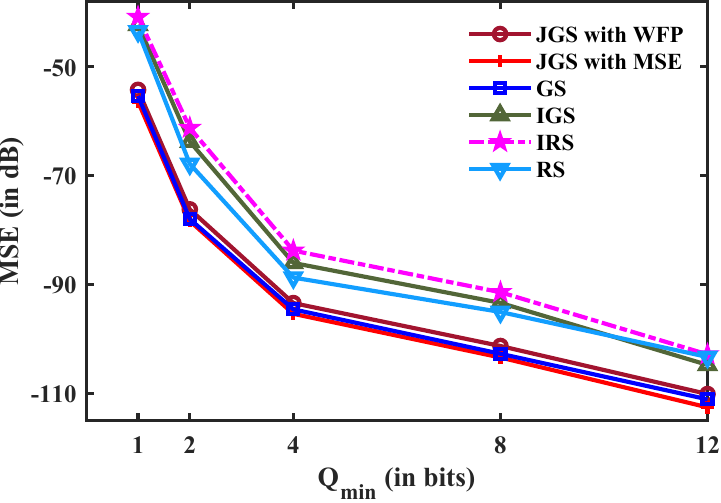}
    \caption{Large-scale problem with quantized linear measurements: Comparison of MSEs achieved by the different algorithms. Parameters: $M_i = \{ 10, 10, 10, 60, 10 \}$, $N_i = \{ 25, 25, 25, 100, 25 \}$, for $i = 1, \ldots, 5$, $K = 30$, $Q_1 = 16$ bits, and $Q_5 = Q_{\min}$. $Q_2$, $Q_3$ and $Q_4$ are chosen to lie between $Q_1$ and $Q_5$. JGS has lower MSE than IGS, IRS, and RS by $3-8$ dB, and higher MSE than GS by $0.5-2.5$ dB.}
    \label{fig:Perf_Comp_N_200_LF_Quant}
    \vspace{-0.1in}
\end{figure}

Next, we consider the large-scale problem as earlier, but instead of adding two levels of noise, we consider quantizing the measurements with a coarse and a fine quantizer. The simulation settings consider the practical scenario where high-precision and low-precision sensors may also be characterized by the quantization levels of their analog-to-digital converters (ADCs). Although quantization error is often modeled as random additive noise, the operation is non-linear. In this experiment, the sensors in the set $\mathcal{S}_1$ are quantized with $Q_1 = 16$ bits, while the sensors in set $\mathcal{S}_5$ are quantized with $Q_5 = Q_{\min}$ bits, where $Q_{\min}$ varies from $1-12$ bits. $Q_2$, $Q_3$ and $Q_4$ are chosen to be uniformly spaced integers within the range $\left( Q_{\min}, 16 \right)$. Figure~\ref{fig:Perf_Comp_N_200_LF_Quant} shows the results of this experiment of estimating the parameter vector $\mathbf{x}$ from its quantized linear measurements. As in the previous cases, JGS gives lower MSE than random and independent greedy methods by $3-8$ dB.

\begin{figure}[tb]
    \captionsetup[subfigure]{font={scriptsize,rm},labelfont={footnotesize,rm}}
    \centering
    \begin{subfloat}[Small-scale problem]{
        \includegraphics[width= 2.6 in]{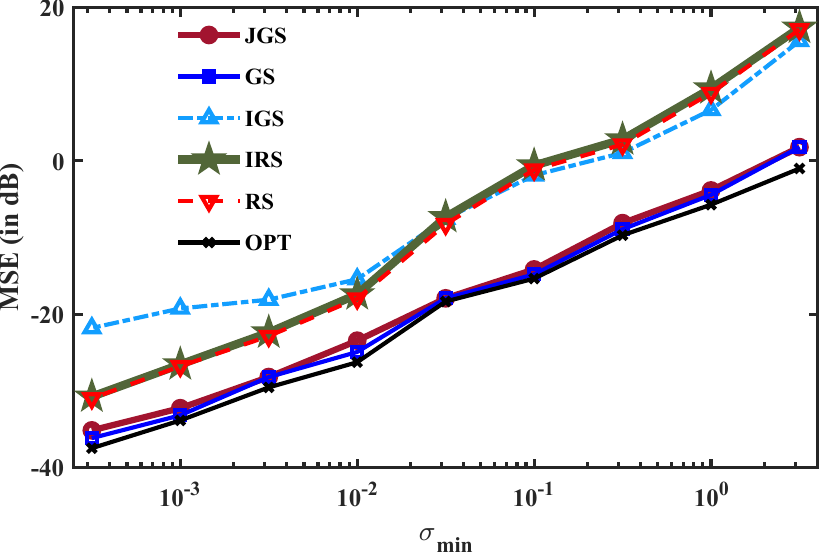}%
        \label{fig:Perf_Comp_N_15_LF_additive_noise}%
    }
    \end{subfloat}
    \vspace{0.1in}
    \begin{subfloat}[Large-scale problem]{
        \includegraphics[width= 2.6 in]{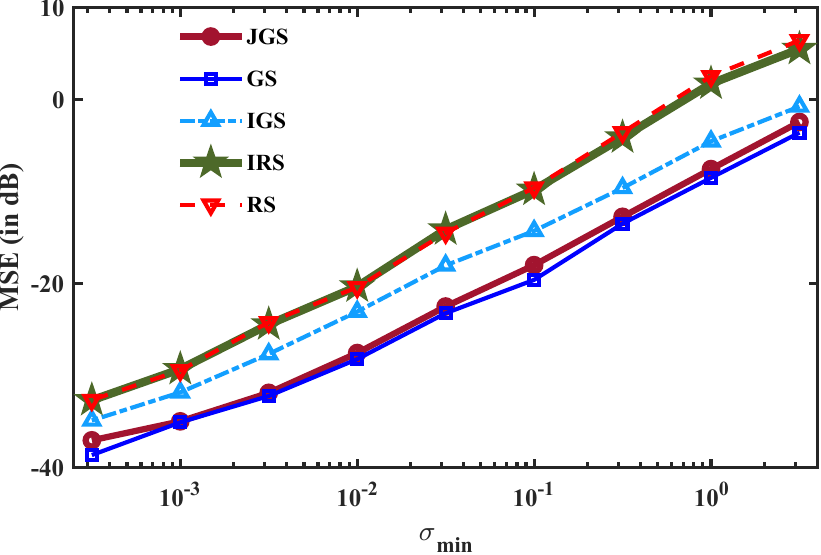}%
        \label{fig:Perf_Comp_N_200_LF_additive_noise}%
    }
    \end{subfloat}
    \caption{Comparison of MSEs of different algorithms for varying $\mathbf{x}$: (a) Parameters: $L = 3$, $M_i = \{3, 5, 2\}$, $N_i = \{5, 10,5\}$, for $i = 1,2,3$, $K = 5$, $\sigma_{_1} = 10^{-4}$, and $\sigma_{_2} = \sigma_{\min}$. The noise variance of $\mathcal{S}_3$ is uniformly spaced between $\sigma_{_1}$ and $\sigma_{_2}$. JGS has lower MSE than IGS, IRS, and RS by $8$-$9$ dB and higher MSE than GS and OPT by $0.5$-$1.5$ dB. (b) Parameters: $L=5$, $M_i = \{10, 10, 10, 60, 10\}$, $N_i = \{25, 25, 25, 100, 25\}$, for $i = 1, \ldots, 5$, $K = 30$, $\sigma_{_1} = 10^{-4}$, and $\sigma_{_5} = \sigma_{\min}$. The noise variances of $\mathcal{S}_2$, $\mathcal{S}_3$ and $\mathcal{S}_4$ are uniformly spaced between $\sigma_{_1}$ and $\sigma_{_5}$. JGS has lower MSE than IGS, IRS, and RS by $4$-$9$ dB, and higher MSE than GS by $1$-$1.5$ dB.}
    \label{fig:Perf_Comp_Varying_x}

    \vspace{-0.1in}
\end{figure}

In the previous experiments, the results are obtained for a fixed $\mathbf{x}$, while the noise covariance matrix $\boldsymbol{\Sigma}$ and the measurement noise at each Monte Carlo trial is changed. In the following experiment, we compare the algorithms for different $\mathbf{x}$ realizations where $\boldsymbol{\Sigma}$ is fixed. In this case, the measurements are corrupted by additive zero-mean Gaussian noise. The noise variances for $\mathcal{S}_1$ and $\mathcal{S}_5$ are chosen as $\sigma_1 = 10^{-4}$ and $\sigma_5 = \sigma_{\min}$, with $\sigma_{\min}$ varying from $3 \times 10^{-4}$ to $3$. The noise variances of $\mathcal{S}_2$, $\mathcal{S}_3$ and $\mathcal{S}_4$ are chosen to be uniformly spaced between $\sigma_1$ and $\sigma_5$. The average MSE is calculated over $1000$ Monte Carlo trials where in each trial, entries of $\mathbf{x}$ are chosen from an i.i.d. uniform distribution selected from the range $[-1,1]$. This ensures that the noise-free measurements are bounded, thus making the SNR values of the two sets finite. Comparison of average MSEs for different methods are presented in Fig.~\ref{fig:Perf_Comp_Varying_x} for both small- and large-scale problems with the values of $M_i$, $N_i$, for $i=1, \ldots, L$, and $K$ as used in the previous experiments. It is observed that JGS performs better than IGS, RS, and IRS by $4$-$9$ dB, while GS outperforms JGS by about $0.5$-$1.5$ dB. These results show that the set of sensors chosen by the proposed JGS algorithm can be used to estimate the underlying parameter vector $\mathbf{x}$ coming from a given distribution as long as the measurement matrix $\mathbf{A}$ and the noise characteristics given by $\boldsymbol{\Sigma}$ remain the same.

\subsection{Non-linear Measurement Model}
To compare the methods for non-linear measurement models, we consider the direction of arrival (DoA) estimation problem. In this model, we assume that there are $K$ far-field sources transmitting from $K$ different directions $\boldsymbol{\theta} = [\theta_1, \theta_2, \ldots, \theta_K] \in \mathbb{R}^K$. The receiving antenna array is composed of two different sets of sensors with different noise characteristics. The problem is then to select the best set of $M_1$ high-precision sensors and $M_2$ low-precision sensors from the linear sensor array so that the MSE of the estimated DoA is minimized. The measurements are related to the directions as  
\begin{equation}
    \mathbf{y} = \mathbf{A} \left( \boldsymbol{\theta} \right) \boldsymbol{\alpha} + \boldsymbol{\eta} \in \mathbb{C}^N,
\end{equation}
where $\mathbf{A} \left( \boldsymbol{\theta} \right)$ is the array sensing matrix whose $(nk)$-th entry is $e^{-j2\frac{\pi}{\lambda}d_n \sin \left( \theta_k \right)}$, for $k = 1,\ldots, K$, $n = 1, \ldots, N$, where $d_n$ denotes the location of the sensors measured from a reference sensor taken at the origin, and $\lambda$ is the wavelength of the transmitted waves.
The vector $\boldsymbol{\alpha} = [\alpha_1, \ldots, \alpha_K]^{T} \in \mathbb{R}^K$ consists of the amplitudes of the waveform received from the $K$ sources. The vector $\boldsymbol{\eta}$ represents the additive noise in the measurements. The goal is to select $M_1$ high-precision sensors and $M_2$ low-precision sensors from the given sensor array, such that the error in the estimation of $\boldsymbol{\theta}$ and $\boldsymbol{\alpha}$ is minimized. 

\begin{figure}
    \centering
    \includegraphics[width=2.6 in]{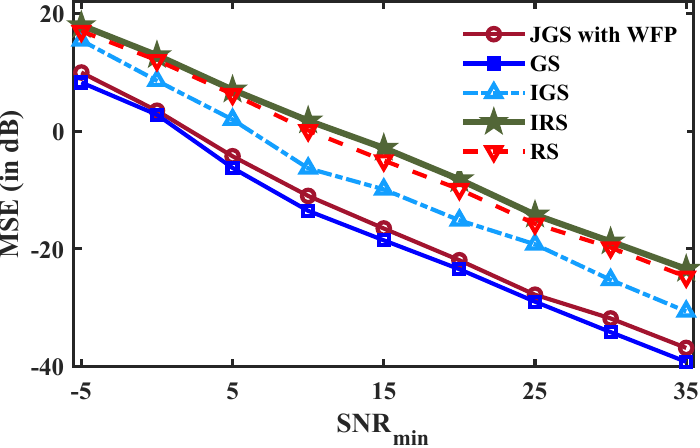}
    \caption{Large-scale problem with non-linear measurements (DoA estimation): Comparison of MSEs achieved by the different algorithms. Parameters: $L = 5$, $M_i = \{10, 10, 10, 60, 10\}$, $N_i = \{25, 25, 25, 100, 25\}$, for $i = 1, \ldots, 5$, $K = 30$, $\text{SNR}_1 = 40$ dB, $\text{SNR}_5 = \text{SNR}_{\min}$, and $\text{SNR}_2$, $\text{SNR}_3$ and $\text{SNR}_4$ are equally spaced between $\text{SNR}_1$ and $\text{SNR}_5$. JGS has lower MSE than IGS, IRS, and RS by $5$-$15$ dB, and higher MSE than GS by $0.5$-$1$ dB.}
    \label{fig:Perf_Comp_DoA}
\end{figure}

In these experiments, simulations are performed to select $M_i = \{10, 10, 10, 60, 10\}$ sensors out of $N_i = \{25, 25, 25, 100, 25\}$ sensors, for $i = 1, \ldots, 5$, for estimating the directions and amplitudes of $15$ sources by using MUSIC algorithm \cite{bienvenu1983optimality, tuncer2009classical}. Thus, the number of parameters to estimate is $K = 30$. 
The simulations are performed with the sources transmitting at $77$-GHz, which corresponds to $\lambda = 4$ mm. The sensor locations $d_n$, $n = 1, \ldots, N$ are chosen uniformly at random within the interval $[0,1]$. In this experiment, we set $\text{SNR}_1 = 40$ dB, and $\text{SNR}_5$ being varied. $\text{SNR}_2$, $\text{SNR}_3$ and $\text{SNR}_4$ are uniformly spaced between $\text{SNR}_1$ and $\text{SNR}_5$. The entries of $\boldsymbol{\theta}$ are chosen uniformly at random from the interval $[-\pi, \pi]$, and entries of $\boldsymbol{\alpha}$ are i.i.d. Gaussian random variables with zero mean and variance 25. Figure~\ref{fig:Perf_Comp_DoA} shows the results of this experiment. As in the linear cases, the algorithm is shown to perform almost as well as GS ($0.5$-$1$ dB worse) even when the measurement model is non-linear and vastly outperforms the random selection methods and the IGS method (by $5$-$15$ dB).

In conclusion, the above experiments show that the proposed JGS algorithm gives a near-optimal performance; the MSE it achieves is much lower than random solutions and independent greedy solutions for both linear and non-linear measurement models. Being a less-constrained problem, GS gives a slightly lower MSE than JGS.

\section{Conclusion} \label{sec:Conclusion}

In this work, we have considered the problem of sensor selection in heterogeneous sensor networks when the number of sensors to choose from each subset is specified as a cardinality constraint. We proposed a joint greedy algorithm to select the sensors from the different subnetworks. We proved theoretically that the worst-case performance of the JGS algorithm is $50\%$ of the optimal performance. In the special case when the HSN comprises two subnetworks, asymptotically, it is seen that the worst-case error reaches $(1-1/e) \approx 63\%$ of the optimal performance when $M_1 / M_2 \ll 1$.

We also showed using simulations that the algorithm works well in practice and performs better than the predictions of the worst-case error bound. The small-scale experiments show that JGS gives a near-optimal solution. Both the small-scale and large-scale network simulations show that JGS gives lower MSE than random selection methods and IGS by $4$-$10$ dB in linear and non-linear measurement systems. The proposed algorithm thus can select sensors well in heterogeneous sensing environments.

\appendices

{

\renewcommand\qedsymbol{$\blacksquare$}
\section{Submodularity of WFC}
\label{append:a}

    The function WFC as given by (\ref{def: WMFC}) is normalized since $\text{WFC}(\varnothing) = \text{WFP}(\mathcal{N}) - \text{WFP}(\mathcal{N}) = 0$.
    
    Let $\mathcal{S}_1 \subset \mathcal{S}_2 \subset \mathcal{N}$. Then,
    \begin{align}
        \text{WFC}(\mathcal{S}_2) &- \text{WFC}(\mathcal{S}_1) = \text{WFP}(\mathcal{N} \backslash \mathcal{S}_1) - \text{WFP}(\mathcal{N} \backslash \mathcal{S}_2) \\
        &= \sum_{i \in \mathcal{S}_2 \backslash \mathcal{S}_1} \sum_{j \in \mathcal{N} \backslash \mathcal{S}_1} w_i w_j \frac{\left| \left< \nabla_{\mathbf{x}} y_i, \nabla_{\mathbf{x}} y_j \right> \right|^2}{\| \nabla_{\mathbf{x}} y_i \|_2^2 \| \nabla_{\mathbf{x}} y_j \|_2^2} \ge 0,
    \end{align}
    since $w_i \ge 0$ for all $i \in \mathcal{N}$. This shows that WFC is a monotone, non-decreasing function.
    
    Finally, with $\mathcal{S}_1 \subset \mathcal{S}_2 \subset \mathcal{N}$ and $\rho_j(\mathcal{S})$ defining the incremental cost of adding the element $j$ to the set $\mathcal{S}$ with WFC as the cost function, the submodularity of WFC is proved.
    \begin{align*}
        \rho_j &(\mathcal{S}_1) - \rho_j (\mathcal{S}_2) \\
        &= \sum_{i \in \mathcal{N} \backslash \mathcal{S}_1} w_i w_j \frac{\left| \left< \nabla_{\mathbf{x}} y_i, \nabla_{\mathbf{x}} y_j \right> \right|^2}{\| \nabla_{\mathbf{x}} y_i \|_2^2 \| \nabla_{\mathbf{x}} y_j \|_2^2} \\
        &\hspace{2cm} - \sum_{i \in \mathcal{N} \backslash \mathcal{S}_2} w_i w_j \frac{\left| \left< \nabla_{\mathbf{x}} y_i, \nabla_{\mathbf{x}} y_j \right> \right|^2}{\| \nabla_{\mathbf{x}} y_i \|_2^2 \| \nabla_{\mathbf{x}} y_j \|_2^2} \\
        &= \sum_{i \in \mathcal{S}_2 \backslash \mathcal{S}_1}  w_i w_j \frac{\left| \left< \nabla_{\mathbf{x}} y_i, \nabla_{\mathbf{x}} y_j \right> \right|^2}{\| \nabla_{\mathbf{x}} y_i \|_2^2 \| \nabla_{\mathbf{x}} y_j \|_2^2} \ge 0.
    \end{align*}

\section{Proof of Theorem \ref{thm:Perf_Guar_Hetero_Half}}
\label{append:c}

  This proof is based on ideas from guarantees derived for submodular maximization under matroid constraints \cite{Filmus_2012_A_Tight}. We denote by $M = \sum_{i=1}^L M_i$ the total number of sensors to be selected. Let $\mathcal{T}^m$ denote the set selected by the JGS algorithm till iteration $m$, with the sensor $t^m$ added at the $m$-th iteration. We define $\rho_t \left( \mathcal{T}^m \right) \coloneqq \mathcal{C} \left( \mathcal{T}^m \cup t \right) - \mathcal{C} \left( \mathcal{T}^m \right)$ to be the incremental cost of adding the element $t$ to the set $\mathcal{T}^m$. Similarly, $\rho_{_\mathcal{R}} \left( \mathcal{T} \right) \coloneqq \mathcal{C} \left( \mathcal{T} \cup \mathcal{R} \right) - \mathcal{C} \left( \mathcal{T} \right)$ is the incremental cost of adding the set $\mathcal{R}$ to the set $\mathcal{T}$. This incremental cost $\rho_{_\mathcal{R}} \left( \mathcal{T} \right)$ is a submodular function as shown next.
 \begin{align}
     \rho_{_\mathcal{P}} \left( \mathcal{T} \right) &+ \rho_{_\mathcal{R}} \left( \mathcal{T} \right) \notag \\
     &= \mathcal{C} \left( \mathcal{T} \cup \mathcal{P} \right) - \mathcal{C} \left( \mathcal{T} \right) +\mathcal{C} \left( \mathcal{T} \cup \mathcal{R} \right) - \mathcal{C} \left( \mathcal{T} \right) \notag \\
     &\ge \mathcal{C} \left( \left( \mathcal{P} \cup \mathcal{R} \right) \cup \mathcal{T} \right) + \mathcal{C} \left( \left( \mathcal{P} \cap \mathcal{R} \right) \cup \mathcal{T} \right) - 2 \mathcal{C} \left( \mathcal{T} \right) \notag \\
     &= \rho_{_{\mathcal{P} \cup \mathcal{R}}} \left( \mathcal{T} \right) + \rho_{_{\mathcal{P} \cap \mathcal{R}}} \left( \mathcal{T} \right),
     \label{eq:submod_set}
 \end{align}
 where the inequality is followed by the submodularity of $\mathcal{C}$.

 The optimal set $\mathcal{T}_{OPT}$ can be divided into its subsets $\mathcal{T}_{OPT,i} \coloneqq \mathcal{T}_{OPT} \cap \mathcal{S}_i$ for $i = 1, \ldots, L$, where $\mathcal{T}_{OPT,i}$ denotes the subset of the optimal set selected from $\mathcal{S}_i$. Since $\mathcal{T}_{OPT}$ is a feasible solution of \eqref{eq:relaxed_prob}, $\left| \mathcal{T}_{OPT,i} \right| = M_i$ for $i = 1, \ldots, L$. Similarly, $\left| \mathcal{T}^\ast \cap \mathcal{S}_i \right| = M_i$ since $\mathcal{T}^\ast$ is also a feasible solution of \eqref{eq:relaxed_prob}. A one-to-one mapping can then be defined from the set $\mathcal{T}^\ast \cap \mathcal{S}_i$ to the set $\mathcal{T}_{OPT,i}$ for each $i \in \{ 1, \ldots, L\}$ as follows. All elements in $\left( \mathcal{T}^\ast \cap \mathcal{S}_i \right) \cap \mathcal{T}_{OPT,i}$, are mapped to themselves. A random bijection is constructed from the elements in $\left( \mathcal{T}^\ast \cap \mathcal{S}_i \right) \backslash \mathcal{T}_{OPT,i}$ to those in $\mathcal{T}_{OPT,i} \backslash \left( \mathcal{T}^\ast \cap \mathcal{S}_i \right)$. As a result, let this bijection map the element $v_j^i \in \left( \mathcal{T}^\ast \cap \mathcal{S}_i \right)$ to the element $o_j^i \in \mathcal{T}_{OPT,i}$, for $j = 1, \ldots, M_i$.

 Let $\mathcal{V}_i \subset \{1, \ldots, M\}$, for $i = 1, \ldots, L$, denote the index set of the iterations in which the JGS algorithm selects the sensor from $\mathcal{S}_i$. Specifically, $m \in \mathcal{V}_i$ if at the $m$-th iteration of Algorithm \ref{alg:JGS}, the sensor is selected from $\mathcal{S}_i$. Thus, we have, $\left| \mathcal{V}_i \right| = M_i$. Then, $\left\{ \mathcal{V}_i \right\}_{i=1}^L$ gives the trajectory followed by the JGS algorithm in selecting its sensors from the different sets $\mathcal{S}_1, \ldots, \mathcal{S}_L$. In particular, $\left\{ \mathcal{V}_i \right\}_{i=1}^L$ depends on $\mathcal{A}$, $\boldsymbol{\Sigma}$, and $M_i$ for $i = 1, \ldots, L$.

 For any iteration $m$ of the JGS algorithm, if $m \in \mathcal{V}_i$ for $i \in \{1, \ldots, L\}$, then at this iteration, the sensor $t^m$ is selected from the set $\left( \mathcal{T}^\ast \cap \mathcal{S}_i \right)$. Let $t^m$ be equal to $v_{j^m}^i \in \left( \mathcal{T}^\ast \cap \mathcal{S}_i \right)$ for $j^m \in \{1, \ldots, M_i\}$. Then, $\rho_{t^m} \left( \mathcal{T}^{m-1} \right) \ge \rho_{o_{j^m}^i} \left( \mathcal{T}^{m-1} \right)$, since $t^m$ is the element selected by the greedy algorithm. Thus, with $\mathcal{T}^0 = \varnothing$ denoting the initial selection, we get
 \begin{align}
     \mathcal{C} \left( \mathcal{T}^\ast \right) &= \sum_{m=1}^M \rho_{t^m} \left( \mathcal{T}^{m-1} \right) \overset{(a)}{=} \sum_{i=1}^L \sum_{m \in \mathcal{V}_i} \rho_{t^m} \left( \mathcal{T}^{m-1} \right) \notag \\
     &\overset{(b)}{\ge} \sum_{i=1}^L \sum_{m \in \mathcal{V}_i} \rho_{o_{j^m}^i} \left( \mathcal{T}^{m-1} \right) \overset{(c)}{\ge} \sum_{i=1}^L \sum_{m \in \mathcal{V}_i} \rho_{o_{j^m}^i} \left( \mathcal{T}^\ast \right) \notag \\
     &\overset{(d)}{\ge} \sum_{i=1}^L \rho_{_{\mathcal{T}_{OPT,i}}} \left( \mathcal{T}^\ast \right) \overset{(e)}{\ge}  \rho_{_{\mathcal{T}_{OPT}}} \left( \mathcal{T}^\ast \right) \notag \\
     & = \mathcal{C} \left( \mathcal{T}_{OPT} \cup \mathcal{T}^\ast \right) -  \mathcal{C} \left( \mathcal{T}^\ast \right) \overset{(f)}{\ge} \mathcal{C} \left( \mathcal{T}_{OPT} \right) -  \mathcal{C} \left( \mathcal{T}^\ast \right)
     \label{eq:thm_proof_main}
\end{align}
Here, (a) follows by rearranging the terms in the summation, (b) follows from the greediness of the JGS algorithm and the one-to-one maps constructed previously, (c) follows from the submodularity of $\mathcal{C}$, (d) and (e) follow from the submodularity of $\rho_{_\mathcal{R}} \left( \mathcal{T} \right)$ from \eqref{eq:submod_set}, and since $\mathcal{C}$ is a normalized function, and (f) follows due to the monotonicity of $\mathcal{C}$.

Finally, rearranging \eqref{eq:thm_proof_main}, Theorem \ref{thm:Perf_Guar_Hetero_Half} is proved.


\section{Proof of Theorem \ref{thm: Perf_Guar_Hetero}}
\label{append:b}

    For this proof, let $\mathcal{T}_m$ denote the set of samples selected by the algorithm till the $m$-th iteration, $c_m = \mathcal{C}\left( \mathcal{T}_m \right) - \mathcal{C}\left( \mathcal{T}_{m-1} \right)$ denote the incremental cost of the added sample at the $m$-th iteration, and $\rho_j \left( \mathcal{T}_m \right)$ denote the incremental cost of adding the element $\{j\}$ to the set $\mathcal{T}_m$. 

    In the proof, the following property of normalized, monotone, non-decreasing submodular functions $\mathcal{C}$ will be extensively used.
    \begin{equation}
        \mathcal{C}(\mathcal{T}) \le \mathcal{C}(\mathcal{S}) + \sum_{j\in (\mathcal{T}-\mathcal{S})} \rho_j(\mathcal{S}), \text{ } \forall \mathcal{S}, \mathcal{T} \in \mathcal{N}
        \label{submod_prop_3}
    \end{equation}    
    where $\rho_j(\mathcal{S}) \coloneqq \mathcal{C} \left( \mathcal{S} \cup \{ j \} \right) - \mathcal{C}(\mathcal{S})$ is the incremental cost of adding the element $j$ to the set $\mathcal{S}$.

    In order to prove Theorem \ref{thm: Perf_Guar_Hetero}, we first show the performance of the algorithm until the switching of the search space occurs at the $m_s$-th iteration and then how that affects the performance of the algorithm after the switch.
    Lemma \ref{lemma_le_ms} states the performance for $m \le m_s$, and Lemma \ref{lemma_ge_ms} states the performance for $m > m_s$, where $m$ denotes the iteration count of the algorithm.

    \begin{lemma}\label{lemma_le_ms}
        For $m \le m_s$, if $\mathcal{T}_m$ is the set selected by the algorithm till the $m$-th iteration, and $\mathcal{C}$ is a normalized non-decreasing submodular function, then
        \begin{multline}
            \mathcal{C}\left( \mathcal{T}_{m} \right) \ge \left[ 1 - \left( 1 - \frac{1}{M_1 + M_2} \right)^{m} \right] \mathcal{C} \left( \mathcal{T}_{OPT} \right), \\
            \text{ for } m = 1, 2, \ldots, m_s.
            \label{cost_ineq_le_ms}
        \end{multline}
    \end{lemma}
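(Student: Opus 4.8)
The plan is to exploit the fact that for every iteration $m \le m_s$ the JGS algorithm has not yet exhausted either of the two subsets, so its greedy search ranges over the \emph{entire} pool $\mathcal{N} \backslash \mathcal{T}_{m-1}$ of unselected sensors. In this regime JGS coincides step-for-step with the classical greedy algorithm of \cite{nemhauser1978analysis} run with a total budget of $M_1 + M_2$ sensors, so the Nemhauser--Wolsey--Fisher contraction argument applies directly. I will track the residual gap $\delta_m \coloneqq \mathcal{C}(\mathcal{T}_{OPT}) - \mathcal{C}(\mathcal{T}_m)$ and show that it shrinks by a factor $\bigl(1 - \tfrac{1}{M_1+M_2}\bigr)$ per step for all $m \le m_s$.

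The first step is the one-step inequality. Fix $m \le m_s$ and apply the submodularity property \eqref{submod_prop_3} with $\mathcal{S} = \mathcal{T}_{m-1}$ and $\mathcal{T} = \mathcal{T}_{OPT}$:
\begin{equation*}
    \mathcal{C}(\mathcal{T}_{OPT}) \le \mathcal{C}(\mathcal{T}_{m-1}) + \sum_{j \in \mathcal{T}_{OPT} \backslash \mathcal{T}_{m-1}} \rho_j(\mathcal{T}_{m-1}).
\end{equation*}
Since $\mathcal{T}_{OPT}$ is feasible for \eqref{eq:relaxed_prob} we have $|\mathcal{T}_{OPT}| = M_1 + M_2$, so the sum has at most $M_1 + M_2$ terms; and because the candidate set at iteration $m$ is all of $\mathcal{N} \backslash \mathcal{T}_{m-1}$, every $j \in \mathcal{T}_{OPT} \backslash \mathcal{T}_{m-1}$ was an admissible choice, so greediness gives $\rho_j(\mathcal{T}_{m-1}) \le c_m = \mathcal{C}(\mathcal{T}_m) - \mathcal{C}(\mathcal{T}_{m-1})$. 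Combining these, $\delta_{m-1} = \mathcal{C}(\mathcal{T}_{OPT}) - \mathcal{C}(\mathcal{T}_{m-1}) \le (M_1 + M_2)\, c_m = (M_1+M_2)(\delta_{m-1} - \delta_m)$, which rearranges to $\delta_m \le \bigl(1 - \tfrac{1}{M_1+M_2}\bigr)\delta_{m-1}$.

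The second step is a short induction. Normalization gives $\mathcal{C}(\mathcal{T}_0) = \mathcal{C}(\varnothing) = 0$, hence $\delta_0 = \mathcal{C}(\mathcal{T}_{OPT})$; iterating the contraction yields $\delta_m \le \bigl(1 - \tfrac{1}{M_1+M_2}\bigr)^m \mathcal{C}(\mathcal{T}_{OPT})$ for $m = 1, \ldots, m_s$. Substituting $\delta_m = \mathcal{C}(\mathcal{T}_{OPT}) - \mathcal{C}(\mathcal{T}_m)$ and rearranging produces exactly \eqref{cost_ineq_le_ms}.

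There is no deep obstacle; the argument is essentially the textbook greedy analysis. The one point deserving care is the bookkeeping that separates this phase from the setting of Theorem~\ref{thm:Perf_Guar_Hetero_Half}: one must verify that for $m \le m_s$ neither subset has been exhausted, so the greedy step genuinely sees all of $\mathcal{T}_{OPT} \backslash \mathcal{T}_{m-1}$ as feasible candidates — this is precisely what fails once $m > m_s$, which is why Lemma~\ref{lemma_ge_ms} needs a separate, weaker estimate — and that the contraction factor is governed by $|\mathcal{T}_{OPT}| = M_1 + M_2$ rather than by the number of sensors selected so far.
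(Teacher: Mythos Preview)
Your proposal is correct and follows essentially the same approach as the paper's own proof: apply the submodularity inequality \eqref{submod_prop_3} to $\mathcal{T}_{OPT}$ and $\mathcal{T}_{m-1}$, use greediness over the full candidate pool (valid for $m\le m_s$) together with $|\mathcal{T}_{OPT}|=M_1+M_2$ to bound the sum of marginals by $(M_1+M_2)c_m$, and then iterate from the normalized base case $\mathcal{C}(\mathcal{T}_0)=0$. The only cosmetic difference is your use of the residual gap $\delta_m$ to phrase the recursion as a contraction, whereas the paper writes the same recursion directly in terms of $\mathcal{C}(\mathcal{T}_m)$.
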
                                                       
    \begin{proof}
    The cost function $\mathcal{C}$ is a normalized function, so that, if $\mathcal{T}_0 = \varnothing$, then, $\mathcal{C} \left( \mathcal{T}_0 \right) = 0$.
    
    The cost function $\mathcal{C}$ is a non-decreasing function, so that, $\rho_j \left( \mathcal{T}_m \right) = \mathcal{C} \left( \mathcal{T}_m \cup \{j\} \right) - \mathcal{C} \left( \mathcal{T}_m \right) \ge 0$.
    
    We use the property of submodular functions given by (\ref{submod_prop_3}), assuming the function is non-decreasing, to get,
    \begin{align}
    \mathcal{C} \left( \mathcal{T}_{OPT} \right) &\le \mathcal{C}\left( \mathcal{T}_m \right) + \sum_{j \in \left( \mathcal{T}_{OPT} - \mathcal{T}_m \right)} \rho_j \left( \mathcal{T}_m \right).
    \label{submod_basic_ineq}
    \end{align}
    
    Now, for $m \le m_s$, the algorithm chooses the best possible sample out of the entire available set of samples at each step. Thus, we get from inequality (\ref{submod_basic_ineq}),
    \begin{align}
        \mathcal{C} \left( \mathcal{T}_{OPT} \right) &\overset{\tiny{(a)}}{\le} \mathcal{C}\left( \mathcal{T}_m \right) + \sum_{j \in \left( \mathcal{T}_{OPT} - \mathcal{T}_m \right)} c_{m+1} \notag \\
        &\overset{\tiny{(b)}}{\le} \mathcal{C}\left( \mathcal{T}_m \right) + \left( M_1 + M_2 \right) c_{m+1} \notag \\
        &= \mathcal{C}\left( \mathcal{T}_m \right) + \left( M_1 + M_2 \right) \left( \mathcal{C}\left( \mathcal{T}_{m+1} \right) - \mathcal{C}\left( \mathcal{T}_m \right) \right),
        \label{submod_ineq_le_ms_1}
    \end{align}
    where step (a) follows since $c_{m+1} \ge \rho_j \left( \mathcal{T}_m \right)$ for any $j \in \left( \mathcal{S}_1 \cup \mathcal{S}_2 - \mathcal{T}_m \right)$ since each iteration of the greedy algorithm chooses the sample with the highest incremental cost. (b) follows as $\left| \mathcal{T}_{OPT} - \mathcal{T}_m \right| \le \left( M_1 + M_2 \right)$, since $\mathcal{T}_{OPT}$ contains only $\left( M_1 + M_2 \right)$ elements.
    
    Using inequality (\ref{submod_ineq_le_ms_1}), and the fact that $\mathcal{C}\left( \mathcal{T}_0 \right) = 0$, we can use inductive analysis to get,
    \begin{equation}
        \mathcal{C}\left( \mathcal{T}_{m+1} \right) \ge \left[ 1 - \left( 1 - \frac{1}{M_1 + M_2} \right)^{m+1} \right] \mathcal{C} \left( \mathcal{T}_{OPT} \right). \notag
    \end{equation}
    \end{proof}
    
    From Lemma \ref{lemma_le_ms}, we get for $m = m_s$,
    \begin{equation}
        \mathcal{C}\left( \mathcal{T}_{m_s} \right) \ge \left[ 1 - \left( 1 - \frac{1}{M_1 + M_2} \right)^{m_s} \right] \mathcal{C} \left( \mathcal{T}_{OPT} \right).
        \label{cost_ineq_ms}
    \end{equation}
    This gives the worst-case performance of the set chosen by the algorithm till the $m_s$-th iteration, that is, till just before the switch in the search space occurs. Using inequality (\ref{cost_ineq_ms}), we next find a lower bound for the performance of the algorithm for iterations $m > m_s$.
    
    \begin{lemma}\label{lemma_ge_ms}
        Assuming that the algorithm has selected all $M_1$ samples from $\mathcal{S}_1$ at the $m_s$-th iteration, for iteration $m = m_s + k$, $1 \le k \le \left( M_1 + M_2 - m_s \right)$, if $\mathcal{T}_{m_s + k}$ is the set selected by the algorithm till the $\left( m_s + k \right)$-th iteration, and $\mathcal{C}$ is a normalized non-decreasing submodular function, then
        \begin{multline}
            \mathcal{C}\left( \mathcal{T}_{m_s+k} \right) \ge \left[ 1 - \frac{M_1}{M_2} \sum_{j=0}^{k-1} \left( 1 - \frac{M_1 + 1}{M_2} \right)^j - \right. \\
            \left. \left( 1 - \frac{M_1 + 1}{M_2} \right)^k  \left( 1 - \frac{1}{M_1 + M_2} \right)^{m_s} \right] \mathcal{C} \left( \mathcal{T}_{OPT} \right), \\
            \text{for } k = 1, 2, \ldots, \left( M_1 + M_2 - m_s \right).
            \label{cost_ineq_ge_ms_1}
        \end{multline}
        
        If, on the other hand, the algorithm exhausts $M_2$ samples from $\mathcal{S}_2$ at the $m_s$-th iteration, then, under similar assumptions on the cost function,
        \begin{multline}
            \mathcal{C}\left( \mathcal{T}_{m_s+k} \right) \ge \left[ 1 - \frac{M_2}{M_1} \sum_{j=0}^{k-1} \left( 1 - \frac{M_2 + 1}{M_1} \right)^j - \right. \\
            \left. \left( 1 - \frac{M_2 + 1}{M_1} \right)^k  \left( 1 - \frac{1}{M_1 + M_2} \right)^{m_s} \right] \mathcal{C} \left( \mathcal{T}_{OPT} \right), \\
            \text{for } k = 1, 2, \ldots, \left( M_1 + M_2 - m_s \right). \label{cost_ineq_ge_ms_2}
        \end{multline}
    \end{lemma}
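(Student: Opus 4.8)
The plan is to run the familiar ``greedy versus optimum'' submodular argument one iteration at a time past the switch point $m_s$, but to split the still-uncovered part of the optimum according to which of the two sets it lies in. Assume, as in \eqref{cost_ineq_ge_ms_1}, that it is $\mathcal{S}_1$ that gets exhausted at iteration $m_s$ (the case \eqref{cost_ineq_ge_ms_2} where $\mathcal{S}_2$ is exhausted is obtained by interchanging $M_1$ and $M_2$ throughout). Fix $m=m_s+k-1$ with $1\le k\le M_1+M_2-m_s$, and apply the submodular inequality \eqref{submod_prop_3} with $\mathcal{S}=\mathcal{T}_m$ and $\mathcal{T}=\mathcal{T}_{OPT}$. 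Since $\mathcal{S}_1,\mathcal{S}_2$ partition $\mathcal{N}$, the index set $\mathcal{T}_{OPT}\setminus\mathcal{T}_m$ decomposes into $(\mathcal{T}_{OPT}\cap\mathcal{S}_1)\setminus\mathcal{T}_m$ and $(\mathcal{T}_{OPT}\cap\mathcal{S}_2)\setminus\mathcal{T}_m$, with $|\mathcal{T}_{OPT}\cap\mathcal{S}_i|=M_i$ because $\mathcal{T}_{OPT}$ is feasible for \eqref{eq:relaxed_prob}, so
\[
\mathcal{C}(\mathcal{T}_{OPT})\ \le\ \mathcal{C}(\mathcal{T}_m)\ +\!\!\sum_{j\in(\mathcal{T}_{OPT}\cap\mathcal{S}_1)\setminus\mathcal{T}_m}\!\!\rho_j(\mathcal{T}_m)\ +\!\!\sum_{j\in(\mathcal{T}_{OPT}\cap\mathcal{S}_2)\setminus\mathcal{T}_m}\!\!\rho_j(\mathcal{T}_m).
\]

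Next I would bound the two sums. For $j\in(\mathcal{T}_{OPT}\cap\mathcal{S}_2)\setminus\mathcal{T}_m$: because $m\ge m_s$, at iteration $m+1$ the greedy search ranges over $\mathcal{S}_2\setminus\mathcal{T}_m$, which contains all such $j$, hence $\rho_j(\mathcal{T}_m)\le c_{m+1}$; there being at most $M_2$ of them, the second sum is at most $M_2\,c_{m+1}$. For $j\in(\mathcal{T}_{OPT}\cap\mathcal{S}_1)\setminus\mathcal{T}_m$ — the optimal sensors the algorithm can no longer select — I would use diminishing returns together with the greediness of the \emph{first} iteration, whose search space is all of $\mathcal{N}$ and at which $j$ was still available: $\rho_j(\mathcal{T}_m)\le\rho_j(\varnothing)=\mathcal{C}(\{j\})\le c_1\le\mathcal{C}(\mathcal{T}_1)\le\mathcal{C}(\mathcal{T}_m)$, the last step by monotonicity. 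There being at most $M_1$ of them, the first sum is at most $M_1\,\mathcal{C}(\mathcal{T}_m)$. Combining yields $\mathcal{C}(\mathcal{T}_{OPT})\le(M_1+1)\,\mathcal{C}(\mathcal{T}_m)+M_2\,c_{m+1}$, and since $\mathcal{C}(\mathcal{T}_{m+1})=\mathcal{C}(\mathcal{T}_m)+c_{m+1}$ this rearranges to the one-step recursion
\[
\mathcal{C}(\mathcal{T}_{m_s+k})\ \ge\ \Bigl(1-\tfrac{M_1+1}{M_2}\Bigr)\mathcal{C}(\mathcal{T}_{m_s+k-1})+\tfrac{1}{M_2}\,\mathcal{C}(\mathcal{T}_{OPT}),\qquad 1\le k\le M_1+M_2-m_s.
\]

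Finally I would solve this linear recursion by induction on $k$, using as base case $k=0$ the bound $\mathcal{C}(\mathcal{T}_{m_s})\ge[1-(1-\tfrac1{M_1+M_2})^{m_s}]\mathcal{C}(\mathcal{T}_{OPT})$ from \eqref{cost_ineq_ms}. With $r\coloneqq 1-\tfrac{M_1+1}{M_2}$ (nonnegative in the regime of interest $M_2>M_1$, so the affine map $x\mapsto rx+\tfrac1{M_2}\mathcal{C}(\mathcal{T}_{OPT})$ is nondecreasing and the inductive hypothesis may be substituted), unrolling gives $\mathcal{C}(\mathcal{T}_{m_s+k})\ge\bigl[r^k\bigl(1-(1-\tfrac1{M_1+M_2})^{m_s}\bigr)+\tfrac1{M_2}\sum_{j=0}^{k-1}r^j\bigr]\mathcal{C}(\mathcal{T}_{OPT})$. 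A short geometric-series identity, $\tfrac1{M_2}\sum_{j=0}^{k-1}r^j+r^k=1-\tfrac{M_1}{M_2}\sum_{j=0}^{k-1}r^j$ (both sides equal $\tfrac{1+M_1r^k}{M_1+1}$, using $1-r=\tfrac{M_1+1}{M_2}$), turns this into exactly \eqref{cost_ineq_ge_ms_1}; interchanging $M_1$ and $M_2$ handles \eqref{cost_ineq_ge_ms_2}.

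The main obstacle is the treatment of the optimal sensors lying in the already-exhausted set $\mathcal{S}_1$: they are never candidates after the switch, so the usual greedy comparison is unavailable, and a blunt bound such as $\rho_j(\mathcal{T}_m)\le\mathcal{C}(\mathcal{T}_{OPT})$ would give a recursion with a coefficient of order $M_1+M_2$ instead of $M_1+1$, destroying the $(1-1/e)$ limit as $M_1/M_2\to0$; the key is to charge each such $j$ against the very first greedy step, where it was present and passed over, to keep the coefficient at $M_1+1$. A secondary, minor point is the sign of $r$: when $M_2\le M_1$ the affine map above is decreasing, so one instead substitutes the elementary upper bound $\mathcal{C}(\mathcal{T}_{m_s+k-1})\le\mathcal{C}(\mathcal{T}_{OPT})$ (valid by monotonicity after extending $\mathcal{T}_{m_s+k-1}$ to a feasible superset), and in that regime the stated guarantee is in any case essentially vacuous.
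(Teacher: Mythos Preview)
Your proof is correct and follows essentially the same route as the paper: split $\mathcal{T}_{OPT}\setminus\mathcal{T}_m$ according to $\mathcal{S}_1,\mathcal{S}_2$, bound the $\mathcal{S}_2$ sum by $M_2\,c_{m+1}$ and the $\mathcal{S}_1$ sum by $M_1\,\mathcal{C}(\mathcal{T}_m)$, obtain the same one-step recursion, and unroll from the Lemma~\ref{lemma_le_ms} base case. The only cosmetic difference is that you charge each $\mathcal{S}_1$ marginal against the \emph{first} greedy step ($\rho_j(\mathcal{T}_m)\le\rho_j(\varnothing)\le c_1\le\mathcal{C}(\mathcal{T}_m)$) while the paper charges it against step $m_s$ ($\rho_j(\mathcal{T}_m)\le\rho_j(\mathcal{T}_{m_s-1})\le c_{m_s}\le\mathcal{C}(\mathcal{T}_m)$); both yield the identical recursion, and your explicit handling of the sign of $r=1-(M_1{+}1)/M_2$ is a point the paper leaves implicit.
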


    \begin{proof}
    
    For $m > m_s$, we can consider as if the algorithm is starting with $\mathcal{T}_{m_s}$ as the initial set. Then, inequality (\ref{cost_ineq_ms}) is the initial cost from where the algorithm begins.
    
    Let us first assume that the algorithm has selected $M_1$ samples from $\mathcal{S}_1$ at the $m_s$-th iteration. The other case, when $M_2$ samples are exhausted from $\mathcal{S}_2$ at the $m_s$-th iteration, follows similarly.
    For $m \ge m_s$, we again use the property of submodular functions given by (\ref{submod_prop_3}), assuming the function is non-decreasing, to obtain,
    \begin{multline}
        \mathcal{C} \left( \mathcal{T}_{OPT} \right) \le \mathcal{C}\left( \mathcal{T}_m \right) + \sum_{j \in \left( \mathcal{T}_{OPT} - \mathcal{T}_m \right) \cap \mathcal{S}_1} \rho_j \left( \mathcal{T}_m \right) + \\
        \sum_{j \in \left( \mathcal{T}_{OPT} - \mathcal{T}_m \right) \cap \mathcal{S}_2} \rho_j \left( \mathcal{T}_m \right).
    \end{multline}
    
    Now, from the $\left( m_s+1\right)$-th iteration, i.e., for $m > m_s$ the algorithm selects the samples from the set $\left( \mathcal{S}_2 - \mathcal{T}_m \right)$ only. It can be seen from Fig.~(\ref{fig:venn_diag}) that $\left( \mathcal{T}_{OPT} - \mathcal{T}_m \right) \cap \mathcal{S}_2 = \left( \mathcal{T}_{OPT} - \mathcal{T}_m \right) \cap \left( \mathcal{S}_2 - \mathcal{T}_m \right)$. Thus, we get,
    \begin{align}
        \mathcal{C} \left( \mathcal{T}_{OPT} \right) &\overset{\tiny{(a)}}{\le} \mathcal{C}\left( \mathcal{T}_m \right) + \sum_{j \in \left( \mathcal{T}_{OPT} - \mathcal{T}_m \right) \cap \left( \mathcal{S}_2 - \mathcal{T}_m \right)} c_{m+1} \notag \\
        & + \sum_{j \in \left( \mathcal{T}_{OPT} - \mathcal{T}_m \right) \cap \mathcal{S}_1} \rho_j \left( \mathcal{T}_m \right) & \notag \\
        &\overset{\tiny{(b)}}{\le} \mathcal{C}\left( \mathcal{T}_m \right) + M_2 c_{m+1} + \sum_{j \in \left( \mathcal{T}_{OPT} - \mathcal{T}_m \right) \cap \mathcal{S}_1} \rho_j \left( \mathcal{T}_m \right) \notag \\
        &\overset{\tiny{(c)}}{\le} \mathcal{C}\left( \mathcal{T}_m \right) + M_2 c_{m+1} \notag \\
        & + \sum_{j \in \left( \mathcal{T}_{OPT} - \mathcal{T}_m \right) \cap \mathcal{S}_1} \rho_j \left( \mathcal{T}_{m_s - 1} \right) \notag
    \end{align}
    \begin{align}
        \mathcal{C} \left( \mathcal{T}_{OPT} \right) 
        &\overset{\tiny{(d)}}{\le} \mathcal{C}\left( \mathcal{T}_m \right) + M_2 c_{m+1} + \sum_{j \in \left( \mathcal{T}_{OPT} - \mathcal{T}_m \right) \cap \mathcal{S}_1} c_{m_s} \notag \\
        &\overset{\tiny{(e)}}{\le} \mathcal{C}\left( \mathcal{T}_m \right) + M_2 c_{m+1} + M_1 c_{m_s} \notag \\
        &\overset{\tiny{(f)}}{\le} M_2 \mathcal{C}\left( \mathcal{T}_{m+1} \right) - \left( M_2 - M_1 - 1 \right) \mathcal{C}\left( \mathcal{T}_m \right).
        \label{submod_ineq_ge_ms_1}
    \end{align}
    Here, 
    \begin{itemize}[label=$-$]
        \item (a) follows since the greedy algorithm chooses from the set $\left( \mathcal{S}_2 - \mathcal{T}_m \right)$ at the $m$-th iteration, making the incremental cost $c_{m+1} \ge \rho_j \left( \mathcal{T}_m \right)$ for any $j \in \left( \mathcal{T}_{OPT} - \mathcal{T}_m \right) \cap \left( \mathcal{S}_2 - \mathcal{T}_m \right) \subseteq \left( \mathcal{S}_2 - \mathcal{T}_m \right)$.
        \item (b) follows as $\left( \mathcal{T}_{OPT} - \mathcal{T}_m \right) \cap \mathcal{S}_2 = \left( \mathcal{T}_{OPT} - \mathcal{T}_m \right) \cap \left( \mathcal{S}_2 - \mathcal{T}_m \right) \subseteq \left( \mathcal{T}_{OPT} \cap \mathcal{S}_2 \right)$, thus making $\left| \left( \mathcal{T}_{OPT} - \mathcal{T}_m \right) \cap \left( \mathcal{S}_2 - \mathcal{T}_m \right) \right| \le \left| \left( \mathcal{T}_{OPT} \cap \mathcal{S}_2 \right) \right| = M_2$.
        \item (c) follows from the definition of submodular functions, since $\left( \mathcal{T}_{m_s - 1} \right) \subseteq \left( \mathcal{T}_m \right)$.
        \item (d) is true since at the $m_s$-th iteration of the greedy algorithm, the incremental cost $c_{m_s} \ge \rho_j \left( \mathcal{T}_{m_s - 1} \right)$ for any $j \notin \mathcal{T}_{m_s -1}$. Now, as $\mathcal{T}_{m_s -1} \subseteq \mathcal{T}_m$, then $\left( \left( \mathcal{T}_{OPT} - \mathcal{T}_m \right) \cap \mathcal{S}_1 \right) \subseteq \left( \left( \mathcal{T}_{OPT} - \mathcal{T}_{m_s - 1} \right) \cap \mathcal{S}_1 \right)$. This implies, if $j \in \left( \mathcal{T}_{OPT} - \mathcal{T}_m \right) \cap \mathcal{S}_1$, then $j \notin \mathcal{T}_{m_s - 1}$.
        \item (e) is true since $\left( \left( \mathcal{T}_{OPT} - \mathcal{T}_m \right) \cap \mathcal{S}_1 \right) \subseteq \left( \mathcal{T}_{OPT} \cap \mathcal{S}_1 \right)$, thus making $\left| \left( \mathcal{T}_{OPT} - \mathcal{T}_m \right) \cap \mathcal{S}_1 \right| \le \left| \mathcal{T}_{OPT}  \cap \mathcal{S}_1 \right| = M_1$.
        \item (f) follows from the monotone increasing property of the given cost. Thus, $\mathcal{C}\left( \mathcal{T}_{m_s} \right) \le \mathcal{C}\left( \mathcal{T}_{m} \right)$.
    \end{itemize}
    
    \begin{figure}[t]
        \centering
        \includegraphics[width=0.48\textwidth]{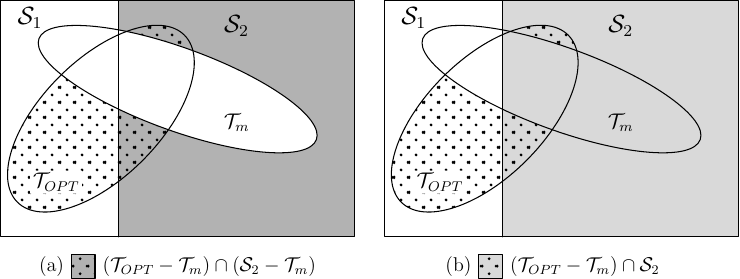}
        \caption{Venn diagram to show $\left( \mathcal{T}_{OPT} - \mathcal{T}_m \right) \cap \left( \mathcal{S}_2 - \mathcal{T}_m \right) = \left( \mathcal{T}_{OPT} - \mathcal{T}_m \right) \cap \mathcal{S}_2$}
        \label{fig:venn_diag}
        \vspace{-0.1in}
    \end{figure}
    
    Now, rearranging the inequality (\ref{submod_ineq_ge_ms_1}), we get,
    \begin{equation}
        \mathcal{C}\left( \mathcal{T}_{m+1} \right) \ge \frac{1}{M_2} \mathcal{C} \left( \mathcal{T}_{OPT} \right) + \left( 1 - \frac{M_1 + 1}{M_2} \right) \mathcal{C}\left( \mathcal{T}_m \right).
        \label{submod_ineq_ge_ms_2}
    \end{equation}
    
    Using inequalities (\ref{submod_ineq_ge_ms_1}) and (\ref{cost_ineq_ms}), we get, for $k = 1, 2, \ldots, \left( M_1 + M_2 - m_s \right)$,
    \begin{multline}
        \mathcal{C}\left( \mathcal{T}_{m_s+k} \right) \ge \left[ 1 - \frac{M_1}{M_2} \sum_{j=0}^{k-1} \left( 1 - \frac{M_1 + 1}{M_2} \right)^j \right.\\
        \left. - \left( 1 - \frac{M_1 + 1}{M_2} \right)^k \left( 1 - \frac{1}{M_1 + M_2} \right)^{m_s} \right] \mathcal{C} \left( \mathcal{T}_{OPT} \right). \notag
    \end{multline}
    
    If the algorithm selects $M_2$ samples from $\mathcal{S}_2$ first at the $m_s$-th iteration, the roles of $M_1$ and $M_2$ get interchanged in the above steps. Similarly, the roles of $\mathcal{S}_1$ and $\mathcal{S}_2$ get interchanged. Using the above analysis, we arrive at inequality (\ref{cost_ineq_ge_ms_1}).
    \end{proof}

    Theorem \ref{thm: Perf_Guar_Hetero} follows directly from Lemma \ref{lemma_ge_ms}, by replacing $m_s + k = M_1 + M_2$, that is, by evaluating the worst-case performance of the algorithm from (\ref{cost_ineq_ge_ms_1}) for the last iteration.

}

\bibliographystyle{IEEEtran}
\bibliography{IEEEabrv,ref_abbr}

\end{document}